\DeclareMathAlphabet{\mathbbb}{U}{bbold}{m}{n}
\newtheorem{theorem}{Theorem}
\newtheorem{lemma}{Lemma}
\newtheorem{definition}{Definition}
\newtheorem{condition}{Condition}
\newtheorem{proposition}{Proposition}
\newtheorem{remark}{Remark}
\let\proof\relax
\let\endproof\relax
\newcommand{\hermconj}  {^{\mathsf{H}}}
\newcommand{\trans}     {^{\mathsf{T}}}
\newcommand{\pha}[1]    {\underline{#1}}
\newcommand{\phaconj}[1]{\overline{\underline{#1}}}
\newcommand{\vect}[1]   {\boldsymbol{#1}}
\newcommand{\phavec}[1] {\underline{\boldsymbol{#1}}}
\newcommand{\phavecconj}[1] {\overline{\underline{\boldsymbol{#1}}}}
\newcommand{\mat}[1]    {\boldsymbol{#1}}
\newcommand{\phamat}[1] {\underline{\boldsymbol{#1}}}
\newcommand{\re}[1]     {\mathscr{R}(#1)}
\newcommand{\im}[1]     {\mathscr{I}(#1)}
\newcommand{\myapprox}{\raisebox{0.5ex}{\texttildelow}}
\begin{document}
\bstctlcite{IEEEexample:BSTcontrol}
    \title{Complex-Frequency Synchronization of Converter-Based Power Systems}
    \author{Xiuqiang~He,~\IEEEmembership{Member,~IEEE,}
          Verena~Häberle,~\IEEEmembership{Student Member,~IEEE,}\\
          and~Florian~Dörfler,~\IEEEmembership{Senior Member,~IEEE}
  \thanks{This work was supported by the European Union’s Horizon 2020 research and innovation program under Grant 883985.}
  \thanks{The authors are with the Automatic Control Laboratory, ETH Zurich, 8092 Zurich, Switzerland. Email:\{xiuqhe,verenhae,dorfler\}@ethz.ch.}}


\maketitle


\begin{abstract}
In this paper, we study phase-amplitude multivariable dynamics in converter-based power systems from a complex-frequency perspective. Complex frequency represents the rate of change of voltage amplitude and phase angle by its real and imaginary parts, respectively. This emerging notion is of significance as it accommodates the multivariable characteristics of power networks where active and reactive power are inherently coupled with both voltage amplitude and phase. We propose the notion of complex-frequency synchronization to study the phase-amplitude multivariable stability issue in a power system with dispatchable virtual oscillator-controlled (dVOC) converters. To achieve this, we separate the system into linear fast dynamics and approximately linear slow dynamics. The linearity property makes it tractable to analyze fast complex-frequency synchronization and slower voltage stabilization. From the perspective of complex frequency and complex-frequency synchronization, we provide novel insights into the equivalence of dVOC and complex-power-frequency droop control, stability analysis methods, and stability criteria. Our study offers a practical solution to address challenging stability issues in converter-based power systems.
\end{abstract}

\begin{IEEEkeywords}
Complex droop control, complex frequency, grid-forming control, multivariable stability analysis, synchronization stability.
\end{IEEEkeywords}

\section{Introduction}

\IEEEPARstart{S}{ynchronization} in AC power systems is a state when the rate of evolution of the phase angle in all generating units is identical \cite{Synchronization-meaning}. This concept is at the core of rotor-angle stability for conventional power systems. Synchronization is also a prerequisite for frequency and voltage stability, as well as smooth and efficient power transmission. However, synchronization is challenging because power systems are inherently \textit{multivariable} and \textit{nonlinear}. Specifically, active power or phase angle ($P$/$\theta$) and reactive power or voltage ($Q$/$V$) are nonlinearly coupled in power systems \cite{Kundur-def}.

In conventional power systems, synchronization is achieved through the interaction of synchronous generators over transmission networks. The dominantly inductive network characteristics result in an approximate $P$/$\theta$ and $Q$/$V$ separation \cite{Kundur-def}. Most analytical studies of synchronization/rotor-angle stability typically assume constant voltage \cite{Bergen-models} and then employ classical swing equation models or the celebrated Kuramoto model. To further address nonlinearity, transient stability is investigated separately from small-disturbance stability, where distinct techniques are applied \cite{Kundur-book}. These techniques are effective in conventional power systems.

Power systems increasingly utilize power converters due to the unprecedented development of renewable energy integration. The loss of synchronism under grid disturbances has occurred in renewable power plants, leading to large-scale generation interruptions \cite{NERC-TR-2017}. Such synchronization stability issues become increasingly challenging due to heterogeneous network characteristics and various converter control strategies. On the network side, $P$/$\theta$ and $Q$/$V$ dynamics become tightly coupled, especially in distribution networks (with low $X/R$ ratios) \cite{De-coupling}, which are increasingly penetrated by distributed energy resources. The sensitivity of load consumption to voltage variations also contributes to this coupling. On the converter-control side, numerous types of control strategies have been developed, e.g., phase-locked loop-based control \cite{He-PLL}, droop control \cite{Huang-droop}, virtual synchronous machine \cite{Ge-VSG}, or matching control \cite{Arghir-matching}, where $P$/$Q$ controls typically occur on the same time scale. The synchronization stability concerning a single converter connected to an infinite grid has been explored for these controls \cite{He-PLL,Huang-droop,Ge-VSG,Arghir-matching}, focusing mainly on the $P$/$\theta$ dynamics of a single converter. Nonetheless, there are few studies on the \textit{multivariable stability analysis concerning both $\theta$ and $V$ dynamics}, especially for \textit{multi-converter interconnected systems}.

Grid-forming (GFM) controls provide preferred solutions for controlling converter-based power systems, particularly microgrids. One of the state-of-the-art GFM controls is \textit{dispatchable virtual oscillator control} (dVOC) \cite{Colombino-dVOC}, which has been shown to achieve almost globally asymptotic stability in terms of synchronization ($\dot \theta$) and voltage ($V$) and meet power setpoints in multi-converter systems \cite{Colombino-dVOC,Gross-dVOC,Subotic-dVOC}. The results have been rigorously established, experimentally validated, and often reproduced \cite{Lu-benchmarking}. Moreover, dVOC reduces to droop control under small voltage amplitude deviations. There is a growing consensus that dVOC is the most performant GFM control \cite{Lu-benchmarking}. However, we emphasize that the stability analysis regarding dVOC in \cite{Colombino-dVOC,Gross-dVOC,Subotic-dVOC} made some restrictive assumptions. Therefore, the theoretical results only apply to a prespecified nominal steady state and a network with a uniform $r/\ell$ ratio. The stability problem with usually drooped steady states in general networks remains open.

A new notion, known as \textit{complex frequency}, has recently been formulated in \cite{Milano-complex-freq}. This notion is both foundational and practical. A complex-frequency quantity, which is the derivative of a \textit{complex angle} \cite{Gu-complex-angle}, represents the rate of change of both voltage logarithm and phase angle. This aligns well with the multivariable coupling characteristics between $P$/$\theta$ and $Q$/$V$ dynamics. With the two-dimensional ``frequency" information, complex frequency has provided practical insights into power system state estimation and control \cite{Zhong-complex-freq,Sanniti-curvature,moutevelis2022taxonomy}. 

In this paper, we study the phase-amplitude multivariable stability issue in a dVOC-controlled power system from the perspective of complex frequency. We define the notion of \textit{complex-frequency synchronization}, reveal that the linear part of dVOC is equivalent to a complex-power-frequency droop control (\textit{complex droop control} for brevity), and show that this complex droop control enables the system to achieve complex-frequency synchronization on a fast time scale. On a relatively slower time scale, the system converges from the synchronous state to a voltage steady state. The slow dynamics are also (approximately) linear when viewed from a ``complex angle" perspective. Thus, we can reformulate the original nonlinear stability problem into two linear subproblems: complex-frequency synchronization and voltage stabilization. We leverage linear system theory to solve these subproblems, effectively handling the case of drooped steady states and nonuniform networks while avoiding the challenge associated with directly treating nonlinear stability. We formulate linear system models, derive quantitative stability conditions, and furnish admittance models and stability criteria. These actionable methods provide a practical solution to the phase-amplitude multivariable stability problem. Moreover, the developed models are extended to encompass more general power systems that include both converters and synchronous generators. Based on this, we discuss how to apply our methods to investigate the stability of the fast converter dynamics in the presence of synchronous generators. The assumptions and results of our analysis are comprehensively validated with high-fidelity nonlinear simulations.

\textit{Notation:} The set of complex numbers is denoted by $\mathbb{C}$. A real or complex scalar is denoted by $x$ or $\pha x$, respectively. The complex conjugate of $\pha x$ is denoted by $\phaconj x$. A vector with real or complex entries is denoted by $\vect x$ or $\phavec x$, respectively. A matrix with real or complex entries is denoted by $\mat{A}$ or $\phamat A$, respectively. Given a matrix $\phamat A \in \mathbb{C}^{m\times n}$, $\phamat A \trans$ and $\phamat A \hermconj$ denote its transpose and Hermitian transpose, respectively. For a real scalar $x$, a complex scalar $\pha x$, a real vector $\vect x$, and a complex vector $\phavec x$, we use $\abs{x}$, $\abs{\pha x}$, $\norm{\vect x}$, $\norm{\phavec x}$ to denote the absolute value, the modulus, the Euclidean norm $(\vect x\trans \vect x)^{1/2}$, and the Euclidean norm $(\phavec x\hermconj \phavec x)^{1/2}$, respectively. For a vector $\phavec x \in \mathbb{C}^{n}$, $\mathrm{diag} \left(\phavec x \right)$ denotes the diagonal matrix formed from $\phavec x$. For a complex scalar, vector, or matrix, $\re{\cdot}$ and $\im{\cdot}$ denote the real and imaginary parts, respectively.

\section{System Description and Problem Statement}
\label{sec-syst-desc}


We consider a three-phase balanced and connected power system, as depicted in Fig.~\ref{fig-converter-based-system}, comprising $N$ converter-interfaced resources such as renewable energy generation, energy storage, etc. The loads in the network are represented by $RLC$ branches or their corresponding constant impedance, which is a common assumption typically used in analytical stability studies \cite{Kundur-book}. We note that while the system herein is purely converter-based, our theoretical results can potentially be applied to more general power systems, cf. Section~\ref{sec-hybrid-system}.

\begin{figure}
  \begin{center}
  \includegraphics[width=8cm]{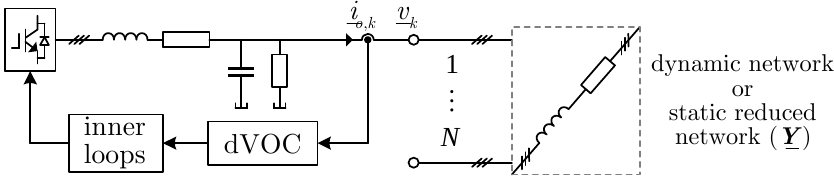}
  \caption{Converter-based power systems, where all converters are controlled by grid-forming dispatchable virtual oscillator control (dVOC).}
  \label{fig-converter-based-system}
  \end{center}
\end{figure}

\textit{1) Power Network:} When ignoring the network dynamics, we obtain a static network representation. A reduced network is further obtained using Kron reduction by eliminating the load nodes. The reduced network is represented by an admittance matrix $\phamat{Y} \in \mathbb{C}^{N \times N}$. For each converter, we define an output current $\pha{i}_{o,k} \in \mathbb{C}$ and a terminal voltage $\pha{v}_k \in \mathbb{C}$. We express $\pha{v}_k$ with the amplitude $v_k$ and the phase angle $\theta _k$ as
\begin{equation}
\label{eq-voltage-expression}
    \pha{v}_k = v_k (\cos\theta_k + j\sin\theta_k) = {v_k}{e^{j{\theta _k}}},
\end{equation}
where $\pha{v}_k$ represents an AC quantity with reference to the stationary frame, and $\theta_k$ represents the phase angle that rotates with time. The power network is described by
\begin{equation}
\label{eq-static-network-equ}
    \phavec{i}_{o} = \phamat{Y}\, \phavec{v},
\end{equation}
where $\phavec{i}_{o} \coloneqq [\pha{i}_{o,1},\cdots,\pha{i}_{o,N}] \trans$ and $\phavec{v} \coloneqq [\pha{v}_{1},\cdots,\pha{v}_{N}] \trans$.
Based on the current-voltage mapping in \eqref{eq-static-network-equ}, the power-flow equations of the network are then given as
\begin{equation}
\label{eq-power-flow-old}
    \pha{s}_k = \textstyle\sum\nolimits_{l = 1}^N \phaconj{y}_{kl} \bigl({v_k}e^{j\theta_k}\bigr) \bigl({v_l} e^{-j\theta_l} \bigr),
\end{equation}
where $\pha{s}_k \coloneqq {p_k} + j {q_k} = \pha{v}_k \phaconj{i}_{o,k}$ denotes the power injection at node $k$, and $\pha{y}_{kl}$ is the $k$th row and $l$th column entry of $\phamat{Y}$.

\textit{2) Grid-Forming dVOC Control:} The converters are controlled by grid-forming dVOC control, whose terminal voltage behavior is given in complex-voltage coordinates as \cite{Colombino-dVOC}
\begin{equation}
\label{eq-dvoc}
    {\pha{\dot v}_k} = \underbrace{j{\omega}_0 {\pha{v}_k} + \eta {e^{j\varphi}}\bigl( \tfrac{p_k^{\star} - jq_k^{\star}}{{v_k^{\star 2}}} {\pha{v}_k} - \overbrace{{\pha{i}}_{o,k}}^{\rm feedback} \bigr)}_{\rm linear} + \underbrace{\eta \alpha \tfrac{{v_k^{\star} - v_k}}{{v_k^{\star}}}{\pha{v}_k}}_{\rm nonlinear},
\end{equation}
where $\omega_0$ denotes the fundamental frequency, $p_k^{\star}$, $q_k^{\star}$, and $v_k^{\star}$ denote the active power, reactive power, and voltage setpoints, respectively, $e^{j\varphi}$ with $\varphi \in [0,\pi/2]$ denotes the rotation operator to adapt to the network impedance characteristics, and $\eta,\alpha > 0$ are control gains. In \eqref{eq-dvoc}, we use complex-valued ordinary differential equations (ODEs) as short-hands for two-dimensional (real-valued) ODEs in $\alpha\beta$ coordinates. In the dVOC control law in \eqref{eq-dvoc}, the first term induces harmonic oscillations of frequency $\omega_0$, the second term synchronizes the relative phase angles to the power setpoints via current feedback, and the third term regulates the voltage amplitude. We notice that the first two terms are linear whereas the third term is nonlinear. Moreover, dVOC has been demonstrated to exhibit a droop-like behavior, making it compatible with conventional power system operation \cite{Lu-benchmarking}.

\textit{3) Multivariable Stability Problem Statement:} The dVOC-based system is formulated by interconnecting the network equation \eqref{eq-static-network-equ} and the node dynamics \eqref{eq-dvoc}, i.e.,
\begin{multline}
\label{eq-dvoc-system}
    \dot {\phavec{v}} = j{\omega}_0 {\phavec{v}} + \eta {e^{j\varphi}}\Bigl[ {\rm diag} \bigl(\bigl \{ ({p_k^{\star} - jq_k^{\star}})/{{v_k^{\star 2}}} \bigr\}_{k=1}^N \bigr) - \phamat{Y} \Bigr] {\phavec{v}} \\
    \quad \, + \eta \alpha {\rm diag} \bigl( \bigl\{ ({v_k^{\star} - v_k})/{v_k^{\star}} \bigr\}_{k=1}^N \bigr) {\phavec{v}},
\end{multline}
where all shunt loads in the network admittance matrix $\phamat{Y}$ can be absorbed into the corresponding setpoints $({p_k^{\star} - jq_k^{\star}})/{{v_k^{\star 2}}}$ such that we consider $\phamat{Y} \in \mathbb{C}^{N \times N}$ a complex-valued and symmetric Laplacian matrix with zero row/column sums.

We aim to study the frequency synchronization and voltage stabilization of the converter-based power system with respect to both $\theta_k$ and $v_k$, or $\pha{v}_k$, through a multivariable analysis that does not rely on the conventional decoupling assumptions.

\begin{remark}
The system in \eqref{eq-dvoc-system} can be related to the well-known Kuramoto oscillator and Stuart-Landau oscillator. The linear part of \eqref{eq-dvoc-system} has been described as the ``linear reformulation of the Kuramoto model" \cite{conteville2012linear}. Furthermore, the particular case of \eqref{eq-dvoc-system} with uniform network $r/\ell$ ratios yields a network of Stuart-Landau oscillators \cite{panteley2020practical}. Since both types of oscillators have been extensively used to model and study oscillation and synchronization behaviors in a wide range of applications including power grids, neuro-physiology, electronic circuits, and seismology \cite{rodrigues2016kuramoto}, the results of this work may also find applications beyond power grids.
\end{remark}

The dVOC-based system has been shown to be almost globally asymptotically stable under certain parametric conditions \cite{Colombino-dVOC,Gross-dVOC,Subotic-dVOC}. The results were achieved by a set of nonlinear analysis approaches. The analysis is theoretically rigorous. However, some restrictive assumptions were required to make the nonlinear problem tractable. As opposed to this, in this work, we take a complex-frequency perspective and provide a novel, more general, and actionable analysis. To do so, we reformulate the system using the notion of complex frequency.

\section{System Reformulation}
\label{sec-reformu}

\subsection{Complex Angle and Complex Frequency}

Consider the complex-voltage expression \eqref{eq-voltage-expression} with nonzero amplitude. We define a complex angle $\pha{\vartheta}_k$ to lead to $\pha{v}_k = e^{\pha{\vartheta}_k}$ \cite{Gu-complex-angle}, which can be seen as a transformation between complex-voltage coordinates and complex-angle coordinates.
\begin{definition}
\label{def-complex-ang}
(Complex angle \cite{Milano-complex-freq,Gu-complex-angle}): Given voltage amplitude $v_k$ (in per unit) and phase angle $\theta_k$ (in rad), we define complex angle ${\pha{\vartheta}_k} \coloneqq {u _k} + j{\theta _k}$, where $u_k \coloneqq \ln{v_k}$ is called voltage logarithm \footnote{$v_k = 0$ is excluded from the definition, which is natural for power system operation. Since $v_k$ is in per-unit and $\ln v_k$ is dimensionless \cite{Milano-complex-freq}, we can still use $\rm rad$ (dimensionless ratio) as the unit of complex angles.}.
\end{definition}

\begin{definition}
\label{def-complex-freq}
(Complex frequency \cite{Milano-complex-freq,Gu-complex-angle}): We define complex frequency $\pha {\varpi}_k \coloneqq \dot {\pha {\vartheta}}_k = {\dot v _k}/v_k + j{\dot \theta _k} = {\varepsilon _k} + j{\omega _k}$ by the time-derivative of ${\pha{\vartheta}_k}$, where $ \varepsilon_k = \dot{v}_k/{v_k}$ is the normalized rate of change of voltage (\textit{rocov}), and $\omega_k$ is angular frequency. The nominal complex frequency is denoted as $\pha{\varpi}_0 \coloneqq j \omega_0$.
\end{definition}

\begin{figure}
  \begin{center}
  \includegraphics{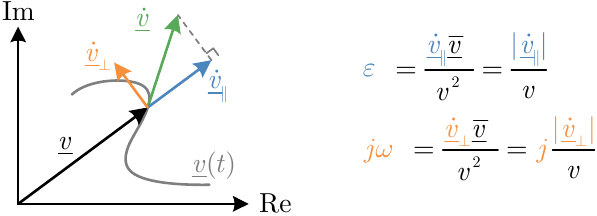}
  \caption{Instantaneous complex frequency $\pha {\varpi} = \varepsilon + j\omega $ is comprised by a radial component $\varepsilon$ and an rotating component $j\omega$.}
  \label{fig-complex-freq-illu}
  \end{center}
\end{figure}

The notion of complex frequency has been recently developed in \cite{Milano-complex-freq} and \cite{Gu-complex-angle}. A geometrical interpretation is provided as follows \cite{Milano-interpretation}. Consider $\pha {\varpi}_k = \dot{\pha{\vartheta}}_k = {\dot{\pha{v}}_k}/{\pha{v}_k} = {\dot{\pha{v}}_k \phaconj{v}_k}/{v_k^2} = {(\dot{\pha{v}}_{\parallel} + \dot{\pha{v}}_{\bot} ) \phaconj{v}_k} / {v_k^2}$, where $\dot{\pha{v}}_k$ is decomposed into a radial component $\dot{\pha{v}}_{\parallel}$ and a rotating component $\dot{\pha{v}}_{\bot}$, which are parallel and perpendicular to $\pha{v}_k$, respectively (see Fig.~\ref{fig-complex-freq-illu}). Since $\dot{\pha{v}}_{\parallel}$ and $\phaconj{v}_k$ have opposite phases, $\dot{\pha{v}}_{\parallel} \phaconj{v}_k$ returns a real number whereas $\dot{\pha{v}}_{\bot} \phaconj{v}_k$ yields an imaginary number. Therefore, the radial and rotating components correspond to the real and imaginary parts of the complex frequency, respectively. In other words, the notion of complex frequency allows for representing two-dimensional frequency information, i.e., the rate of change of voltage in the radial direction and the angular speed in the rotational direction.

We comment that the coordinate $\ln{(v)}$ in Definition~\ref{def-complex-ang} and its differential ${\rm d}v/v$ in Definition~\ref{def-complex-freq} have been used in conventional static and dynamic analysis \cite{Ilic-lnv,Chen-dv-v}, as well as in the representation of converter dynamics \cite{Kogler-norm,moutevelis2022taxonomy}.

\subsection{Complex-Frequency Synchronization}

Next, we introduce the novel concept of complex-frequency synchronization, which, to the best of our knowledge, has not been previously considered or studied in the literature so far.

\begin{definition}
\label{def-complex-freq-sync-def}
(Complex-frequency synchronization): Consider $N$ nodes in a connected network. The voltage trajectories achieve complex-frequency synchronization if all complex frequencies $\dot{\pha{\vartheta}}_k = \dot{\pha{v}}_k / \pha{v}_k$ converge to a common constant complex frequency $\pha{\varpi}_{\rm{sync}}$ as $t \rightarrow \infty$, i.e., $\dot{\pha{\vartheta}}_k \to \pha{\varpi}_{\rm{sync}}, \, t \rightarrow \infty,\, \forall k$.
\end{definition}

Complex-frequency synchronization implies both angular frequency synchronization $\dot \theta _k \to \im{\pha{\varpi}_{\rm{sync}}}$ and rocov synchronization $\dot{{v}}_k / {v}_k \to \re{\pha{\varpi}_{\rm{sync}}}$. Typically, one is interested in the stationary condition for power system operation, where $\dot{{v}}_k \rightarrow 0$, which is also known as \textit{augmented synchronization} \cite{Yang-augmented-sync}. In this work, however, we utilize \textit{nonzero rocov} in conjunction with frequency as a global variable to indicate the imbalance of complex power, encompassing both active and reactive power. This novel concept of complex-frequency synchronization makes multivariable stability analysis more tractable (but it is limited to solely converter dynamics so far). We remark that rocov has been employed as a global indicator for load-generation imbalance in DC grids \cite{Zhao-dc-microgrid}. In this work, we demonstrate that the rocov representation also facilitates stability analysis in AC grids. Moreover, the stabilization of the rocov will also be investigated.

\subsection{Normalized Power-Flow Equations}

In a synchronous state under the classical notion, the power flow in \eqref{eq-power-flow-old} is invariant, i.e., balanced, where the phase angle differences remain invariant while the voltage amplitudes are assumed to be constant. However, for a complex-frequency synchronous state where $\dot{{v}}_k / {v}_k \neq 0$, the power flow in \eqref{eq-power-flow-old} is not invariant due to the constantly-changing voltage amplitudes. To find an invariant representation of load-generation balance in this case, we define the normalized power as $ {\pha{\varsigma}_k} \coloneqq {\rho _k} + j{\sigma _k} \coloneqq \pha{s}_k/{v_k^2} = ({p_k} + j{q_k})/{v_k^2}$, where ${\rho _k} \coloneqq p_k/{v_k^2}$ and ${\sigma _k} \coloneqq q_k/{v_k^2}$ are referred to as normalized active and reactive power, respectively. From \eqref{eq-power-flow-old}, we then derive the \textit{normalized power-flow equations}, their conjugates, and their vector form as follows,
\begin{align}
\label{eq-power-flow-new}
    \pha{\varsigma}_k &= \textstyle\sum\nolimits_{l = 1}^N {\phaconj{{y}}_{kl}} e^{\phaconj{\vartheta}_{l} - \phaconj{\vartheta}_{k}}, \\
\label{eq-power-flow-conj}
    \phaconj{\varsigma}_k &= \textstyle\sum\nolimits_{l = 1}^N {\pha{{y}}_{kl}}e^{\pha{\vartheta}_{l} - \pha{\vartheta}_{k}} = \textstyle\sum\nolimits_{l = 1}^N {\pha{{y}}_{kl}} \frac{\pha{v}_l}{\pha{v}_k}, \\
\label{eq-power-flow-conj-vec}
    \phavecconj{\varsigma} &= [\phaconj{\varsigma}_1,\cdots,\phaconj{\varsigma}_N] \trans = {\rm {diag}} {\bigl( \{1/\pha{{v}}_{k}\}_{k=1}^N \bigr)} \phamat{Y}\, \phavec{v}.
\end{align}

\begin{proposition}
Both the complex-voltage ratio $\pha{v}_l/\pha{v}_k$ and the normalized power $\pha{\varsigma}_k$ remain invariant in a complex-frequency synchronous state.
\end{proposition}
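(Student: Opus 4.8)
The plan is to argue directly from Definition~\ref{def-complex-freq-sync-def} together with the normalized power-flow equation \eqref{eq-power-flow-conj}. By definition, in a complex-frequency synchronous state every complex frequency equals a common constant, i.e., $\dot{\pha{\vartheta}}_k = \pha{\varpi}_{\rm sync}$ for all $k$. The conceptual point driving the whole argument is that, after normalization, the load-generation balance depends on the voltages only through their ratios, which is precisely the invariance that fails for the unnormalized power flow \eqref{eq-power-flow-old} when $\dot{v}_k/v_k \neq 0$.

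First I would establish invariance of the complex-voltage ratio. Recalling $\pha{v}_k = e^{\pha{\vartheta}_k}$ from Definition~\ref{def-complex-ang}, we have $\pha{v}_l/\pha{v}_k = e^{\pha{\vartheta}_l - \pha{\vartheta}_k}$, so it suffices to show that the complex-angle difference $\pha{\vartheta}_l - \pha{\vartheta}_k$ is constant in time. Differentiating and using synchronization gives $\tfrac{\rm d}{{\rm d}t}(\pha{\vartheta}_l - \pha{\vartheta}_k) = \dot{\pha{\vartheta}}_l - \dot{\pha{\vartheta}}_k = \pha{\varpi}_{\rm sync} - \pha{\varpi}_{\rm sync} = 0$. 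Hence $\pha{\vartheta}_l - \pha{\vartheta}_k$ is a fixed complex constant, and its exponential $\pha{v}_l/\pha{v}_k$ is invariant, establishing the first claim.

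Next I would deduce invariance of the normalized power. The conjugate normalized power-flow equation \eqref{eq-power-flow-conj} reads $\phaconj{\varsigma}_k = \sum_{l=1}^N \pha{y}_{kl}\, \pha{v}_l/\pha{v}_k$. The admittance entries $\pha{y}_{kl}$ are fixed network parameters, and each ratio $\pha{v}_l/\pha{v}_k$ was just shown to be invariant; a finite sum of constants multiplied by invariant quantities is therefore invariant, so $\phaconj{\varsigma}_k$ is constant in time. Taking complex conjugates yields that $\pha{\varsigma}_k$ is likewise invariant, which completes the argument.

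I do not anticipate a substantive technical obstacle, since the result follows by direct differentiation and substitution. The only point warranting care is the interpretation of ``synchronous state'': the statement should be read as asserting invariance \emph{on} the synchronous manifold where $\dot{\pha{\vartheta}}_k = \pha{\varpi}_{\rm sync}$ holds exactly for every $k$, rather than merely in the asymptotic limit $t \to \infty$. With that reading fixed, the chain of implications from equal complex frequencies, to constant complex-angle differences, to invariant voltage ratios, to invariant normalized power via \eqref{eq-power-flow-conj} is exact.
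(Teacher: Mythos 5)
Your proof is correct and follows essentially the same route as the paper: the paper differentiates the ratio $\pha{v}_l/\pha{v}_k$ directly and finds $\tfrac{\rm d}{{\rm d}t}(\pha{v}_l/\pha{v}_k) = (\pha{v}_l/\pha{v}_k)(\dot{\pha{\vartheta}}_l - \dot{\pha{\vartheta}}_k) = 0$, whereas you differentiate the complex-angle difference and exponentiate, which is the same computation in complex-angle coordinates. The second step, invariance of $\pha{\varsigma}_k$ via \eqref{eq-power-flow-conj}, is identical in both.
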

\begin{proof}
From $ \frac{\rm d}{{\rm d}t} \frac{\pha{v}_l}{\pha{v}_k} = \frac{\pha{v}_l}{\pha{v}_k} \left( \frac{\dot {\pha{v}}_l}{\pha{v}_l} - \frac{\dot {\pha{v}}_k}{\pha{v}_k} \right) = \frac{\pha{v}_l}{\pha{v}_k} \bigl({\dot {\pha{\vartheta}}}_{l} - {\dot {\pha{\vartheta}}}_{k} \bigr)$, it follows that $\frac{\rm d}{{\rm d}t} \frac{\pha{v}_l}{\pha{v}_k} = 0$ in the synchronous state where ${\dot {\pha{\vartheta}}}_{l} = {\dot {\pha{\vartheta}}}_{k}$. It then follows from \eqref{eq-power-flow-conj} that $\pha{\varsigma}_k$ also remains invariant.
\end{proof}

The invariance of $\pha{v}_l/\pha{v}_k = v_l/v_k e^{j(\theta_l - \theta_k)}$ implies that both the voltage amplitude ratio and the phase angle difference remain invariant. This property allows voltage amplitudes to constantly change in a synchronous state, which is significant as it drops the constant-voltage assumption. It is of course not relevant in a steady state but useful for analyzing system transients. Furthermore, we note that normalized power coordinates have been utilized in previous studies \cite{Kogler-norm}.

\subsection{Linear Complex Power Flow Equations}
\label{sec-dc-complex-power-flow}

We denote complex-angle differences as $\pha{\vartheta}_{lk} \coloneqq \pha{\vartheta}_{l} - \pha{\vartheta}_{k}$. Assume small ${\pha{\vartheta}}_{lk}$, which implies $v_k \approx v_l$ and $\theta _k \approx \theta _l$. We can then approximate $e^{{{\pha{\vartheta}}_{lk}}}$ as $e^{{{\pha{\vartheta}}_{lk}}} \approx 1 + {\pha{\vartheta}}_{lk}$ by the leading terms of the Maclaurin Series. Substituting this approximation into \eqref{eq-power-flow-conj}, we obtain that $\phaconj{\varsigma}_k \approx \sum\nolimits_{l = 1}^N {{\pha{{y}}_{kl}} \left(1 + {\pha{\vartheta}}_{lk} \right)}$. We thereby define linear complex power flow as follows.
\begin{definition}
(Linear complex power flow) Assume small complex-angle differences ${\pha{\vartheta}}_{lk} \approx 0$, the linear complex power flow $\phaconj{\varsigma}^{\rm dc}_k$ is defined as \footnote{The linear power flow in polar coordinates is also referred to as dc power flow in the literature \cite{Stott-dc-power}. Hence, the superscript ``dc" is employed here.}
\begin{equation}
\label{eq-dc-power-flow}
    \phaconj{\varsigma}^{\rm dc}_k \coloneqq \textstyle\sum\nolimits_{l = 1}^N {{\pha{{y}}_{kl}} \left(1 + {\pha{\vartheta}}_{lk} \right)},
\end{equation}
or formulated in vector form as
\begin{equation}
\label{eq-dc-power-flow-vec}
    \phavecconj{\varsigma}^{\rm dc} = {\rm diag} \bigl(\{1 - \pha{\vartheta}_k\}_{k=1}^N \bigr) \phamat{Y}\mathbbm{1}_N + \phamat{Y}\, \phavec{\vartheta} = \phamat{Y}\, \phavec{\vartheta},
\end{equation}
where $\phavecconj{\varsigma}^{\rm dc} \coloneqq [\phaconj{\varsigma}^{\rm dc}_1,\cdots,\phaconj{\varsigma}^{\rm dc}_N] \trans$, $\phavec{\vartheta} \coloneqq [\pha{\vartheta}_1,\cdots,\pha{\vartheta}_N] \trans$, $\mathbbm{1}_N$ denotes the column vector of all ones, and $\phamat{Y} \mathbbm{1}_N = \mathbbb{0}_N$.
\end{definition}

\begin{remark}
The presented linear complex power-flow equations extend the classical linear (dc) power-flow equations from a complex-angle perspective. The classical ones mainly consider active power \cite{Stott-dc-power}. For a branch with admittance $\pha{y}$ between nodes $k$ and $l$, the linear power flowing from node $k$ to $l$ is $\phaconj{\varsigma}^{\rm dc} = \pha{y} \left({\pha{\vartheta}}_{k} - {\pha{\vartheta}}_{l} \right)$ as in \eqref{eq-dc-power-flow}. The classical linear power flow reads $p^{\rm dc} = ({\theta _k} - {\theta _l})/x$, which holds with two additional assumptions: the network is dominantly inductive, i.e., $\pha{y} = 1/(jx)$, and $v_k \approx v_l \approx 1$. It can be seen that $\phaconj{\varsigma}^{\rm dc}$ degenerates into $p^{\rm dc}$ when applying these two assumptions.
\end{remark}

\begin{remark}
\label{remark-lossless-dc-power}
The linear complex power flow is lossless across the network, as observed by $\mathbbm{1}_N\trans \phavecconj{\varsigma}^{\rm dc} = \mathbbm{1}_N\trans \phamat{Y}\, \phavec{\vartheta} = 0$ from \eqref{eq-dc-power-flow-vec}. This property holds regardless of the network impedance characteristics. We emphasize that the original power flow (without being normalized) is lossy still.
\end{remark}

\subsection{Reformulation of dVOC as Complex Droop Control}

\textit{1) Complex Droop Control:} We separate dVOC into \emph{a linear dVOC core} and the other nonlinear part. The dVOC core is defined by the first two (linear) terms of dVOC in \eqref{eq-dvoc} as
\begin{equation}
\label{eq-dvoc-core}
    {\pha{\dot v}_k} = \pha{\varpi}_0 {\pha{v}_k} + \eta {e^{j\varphi}}\left( \phaconj{\varsigma}_k^{\star} {{\pha{v}}_k} - {{\pha{i}}_{o,k}} \right),
\end{equation}
where $\phaconj{\varsigma}_k^{\star} = ({p_k^{\star} - jq_k^{\star}})/v_k^{\star 2}$ denotes the normalized power setpoint. The dVOC formulation in \cite{Colombino-dVOC} was interpreted from a consensus synchronization perspective. We provide another insightful interpretation from the viewpoint of droop control.

\begin{proposition}
\label{prop-equivalence-droop}
The dVOC core in \eqref{eq-dvoc-core} is equivalent to a complex droop control:
\begin{equation}
\label{eq-augmented-droop-cplx}
    \dot{\pha {\vartheta}}_k = \pha {\varpi}_0 + \eta e^{j\varphi} \left( \phaconj{\varsigma}_k^{\star} - \phaconj{\varsigma}_k\right).
\end{equation}
\end{proposition}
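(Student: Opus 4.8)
The plan is to divide the dVOC core \eqref{eq-dvoc-core} through by the complex voltage $\pha{v}_k$ and then reinterpret each resulting term via the complex-angle and normalized-power definitions. This division is legitimate because nonzero voltage amplitude was assumed when introducing the complex angle in Definition~\ref{def-complex-ang}. First I would rewrite \eqref{eq-dvoc-core} as $\pha{\dot v}_k / \pha{v}_k = \pha{\varpi}_0 + \eta e^{j\varphi}\bigl(\phaconj{\varsigma}_k^{\star} - \pha{i}_{o,k}/\pha{v}_k\bigr)$, noting that the setpoint term $\phaconj{\varsigma}_k^{\star}$ is already in normalized form and so is unaffected by the division.

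Next I would identify the two nontrivial terms. The left-hand side $\pha{\dot v}_k / \pha{v}_k$ is by Definition~\ref{def-complex-freq} precisely the complex frequency $\dot{\pha{\vartheta}}_k$. The feedback term $\pha{i}_{o,k}/\pha{v}_k$ is the conjugate normalized power $\phaconj{\varsigma}_k$: starting from $\pha{s}_k = \pha{v}_k \phaconj{i}_{o,k}$ and $\pha{\varsigma}_k = \pha{s}_k/v_k^2$, conjugation gives $\phaconj{\varsigma}_k = \phaconj{v}_k \pha{i}_{o,k}/v_k^2 = \pha{i}_{o,k}/\pha{v}_k$, where the last step uses $\phaconj{v}_k/v_k^2 = 1/\pha{v}_k$ (since $\pha{v}_k \phaconj{v}_k = v_k^2$). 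Equivalently, this follows directly from the normalized power-flow identity \eqref{eq-power-flow-conj} combined with the network relation \eqref{eq-static-network-equ}. Substituting both identifications into the divided equation yields exactly \eqref{eq-augmented-droop-cplx}.

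The argument is a single division followed by two substitutions, so there is no genuine obstacle; the only point requiring care is verifying the identity $\phaconj{\varsigma}_k = \pha{i}_{o,k}/\pha{v}_k$, which I would state explicitly so that the reader sees how the \emph{current-feedback} term in complex-voltage coordinates becomes the \emph{conjugate normalized-power feedback} in complex-angle coordinates. This correspondence is what exposes the droop structure: the complex frequency $\dot{\pha{\vartheta}}_k$ is driven by the mismatch $\phaconj{\varsigma}_k^{\star} - \phaconj{\varsigma}_k$ between the normalized-power setpoint and the actual normalized power, rotated by $e^{j\varphi}$ and scaled by $\eta$. It is worth observing that, unlike the synchronization arguments in \cite{Colombino-dVOC}, this reformulation is an exact algebraic equivalence rather than an approximation, valid for arbitrary networks $\phamat{Y}$.
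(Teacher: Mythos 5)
Your proof is correct and follows essentially the same route as the paper: both rest on the identities $\dot{\pha{\vartheta}}_k = \dot{\pha{v}}_k/\pha{v}_k$ and $\phaconj{\varsigma}_k = \phaconj{v}_k\,\pha{i}_{o,k}/v_k^2 = \pha{i}_{o,k}/\pha{v}_k$, with the only cosmetic difference being that you divide \eqref{eq-dvoc-core} by $\pha{v}_k$ while the paper expands \eqref{eq-augmented-droop-cplx} in the reverse direction.
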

\begin{proof}
The proof is straightforward by expanding \eqref{eq-augmented-droop-cplx} as $\dot{\pha{\vartheta}}_k = \dot{\pha{v}}_k / \pha{v}_k$ and noticing that $\phaconj{\varsigma}_k = ({p_k - jq_k})/v_k^{2} = (\phaconj{v}_k \pha{i}_{o,k})/v_k^{2} = \pha{i}_{o,k}/\pha{v}_k$.
\end{proof}

Proposition~\ref{prop-equivalence-droop} suggests that the dVOC core, similar to a p-f droop control $\dot{\theta}_k = \omega _0 + \eta \left( p _k^{\star} - p _k\right)$ allowing frequency synchronization, is able to achieve complex-frequency synchronization, where the complex frequency $\dot{\pha{\vartheta}}_k$ indicates \textit{the balance of the normalized power $\phaconj{\varsigma}_k$.} This equivalence also suggests the compatibility of dVOC with respect to most existing GFM controllers, including p-f droop control, virtual synchronous machine (VSM), matching control, etc. This compatibility is not surprising since the complex droop control in \eqref{eq-augmented-droop-cplx} augments the classical p-f droop control, which is the core building block of all GFM controls.

\textit{2) Voltage Regulation of Complex Droop Control:} The complex droop control in \eqref{eq-augmented-droop-cplx} aims to regulate the rate of change of voltage (rocov) and frequency, but it cannot regulate voltage amplitude in a direct manner. We augment it with a voltage amplitude regulation term to directly regulate voltage amplitude. One option is to add a proportional control term as in the dVOC in \eqref{eq-dvoc}. Alternatively, we can control voltage logarithm using a dVOC variant similar to the classical one in \eqref{eq-dvoc} as
\begin{equation}
\label{eq-dvoc-log}
    {\pha{\dot v}_k} = \pha{\varpi}_0 {\pha{v}_k} + \eta {e^{j\varphi}}\left( {\phaconj{\varsigma}_k^{\star} {\pha{v}_k} - {{\pha{i}}_{o,k}}} \right) + \eta \alpha \left( {u_k^{\star} - u_k} \right) {\pha{v}_k},
\end{equation}
where $u _k = \ln{v_k}$ and $u _k^{\star} \coloneqq \ln{v_k^{\star}}$. For the dVOC in \eqref{eq-dvoc} and its variant in \eqref{eq-dvoc-log}, the voltage feedback reveals to be highly similar around the operating region where $v_{k} \approx {v_k^{\star}}$. This claim holds due to $1 - v_{k}/{v_k^{\star}} \approx \ln {v_k^{\star}} - \ln v_{k}$ following from the Taylor Series $\ln x = x - 1 + {\rm o}(x-1)$.

In practice, a low-pass filter is typically employed in the voltage feedback to filter out harmonics, which results in
\begin{equation}
\label{eq-dvoc-filter}
\begin{aligned}
    {\pha{\dot v}_k} &= \pha{\varpi}_0 {\pha{v}_k} + \eta {e^{j\varphi}}\left( {\phaconj{\varsigma}_k^{\star} {\pha{v}_k} - {{\pha{i}}_{o,k}}} \right) + \eta \alpha \left( {u_k^{\star} - u_{f,k}} \right) {\pha{v}_k},\\
    \tau \dot{u}_{f,k} &= {u}_k - {u}_{f,k},
\end{aligned}
\end{equation}
where $\tau > 0$ denotes the time constant of the filter and $u_{f,k}$ denotes the filtered version of $u_k$. The dVOC version in \eqref{eq-dvoc-filter} is equivalent to a voltage regulation-enabled complex droop control, which reads in complex-angle coordinates as
\begin{equation}
\label{eq-dvoc-log-filter}
\begin{aligned}
    \dot{\pha {\vartheta}}_k &= \pha {\varpi}_0 + \eta e^{j\varphi} \left( \phaconj{\varsigma}_k^{\star} - \phaconj{\varsigma}_k\right) + \eta \alpha \left( {u_k^{\star} - {u}_{f,k}} \right),\\
    \tau \dot{u}_{f,k} &= {u}_k - {u}_{f,k} = \re{\pha {\vartheta}_k} - {u}_{f,k}.
\end{aligned}
\end{equation}

\subsection{Closed-Loop Model of the Converter-Based Power System}

\textit{1) In Complex-Voltage Coordinates:} Considering complex-voltage coordinates, the closed-loop model of the converter-based power system is established by interconnecting the network equation \eqref{eq-static-network-equ} and the node dynamics \eqref{eq-dvoc-filter} as
\begin{equation}
\label{eq-state-space-filter}
\begin{aligned}
    \dot {\phavec{v}} &= \pha{\varpi}_0 {\phavec{v}} + \eta {e^{j\varphi}}\left( {{\rm diag} \left(\phavecconj{\varsigma}^{\star} \right) - \phamat{Y}} \right) {\phavec{v}} + \eta \alpha {\rm diag} \bigl( \vect{u}^{\star} - \vect{u}_{f} \bigr){\phavec{v}} \\
    \tau \dot{\vect{u}}_f &= \bigl[\ln \abs{\pha{v}_1},\cdots,\ln \abs{\pha{v}_N} \bigr]\trans - \vect{u}_f,
\end{aligned}
\end{equation}
where $\phavecconj{\varsigma}^{\star} \coloneqq [\phaconj{\varsigma}_1^{\star},\cdots,\phaconj{\varsigma}_N^{\star}] \trans$, $\vect{u}^{\star} \coloneqq [u_1^{\star},\cdots,u_N^{\star}] \trans$, and $\vect{u}_f$ is the filtered version of $\vect{u} \coloneqq \bigl[\ln \abs{\pha{v}_1},\cdots,\ln \abs{\pha{v}_N} \bigr]\trans$.

\textit{2) In Complex-Angle Coordinates:} As in \eqref{eq-dvoc-log-filter}, the model in \eqref{eq-state-space-filter} can be transformed into complex-angle coordinates as
\begin{equation}
\label{eq-state-space-log-filter}
\begin{aligned}
    \dot {\phavec{\vartheta}} &= \pha{\varpi}_0 \mathbbm{1}_N + \eta {e^{j\varphi}} \left( {\phavecconj{\varsigma}^{\star} - \phavecconj{\varsigma}} \right) + \eta \alpha \left(\vect{u}^{\star} - \vect{u}_f \right) \\
    \tau \dot{\vect{u}}_f &= \vect{u} - \vect{u}_f = \re{\phavec{\vartheta}} - \vect{u}_f.
\end{aligned}
\end{equation}
where $\vect{u} = \re{\phavec{\vartheta}}$.
Both models in different coordinates are shown in Fig.~\ref{fig-system-models}. The models, as compared to the original one formulated in \eqref{eq-dvoc-system}, employ a variant of dVOC to accommodate complex-angle coordinates. Moreover, a low-pass filter has been incorporated into the voltage feedback.

\begin{figure}
  \begin{center}
  \includegraphics{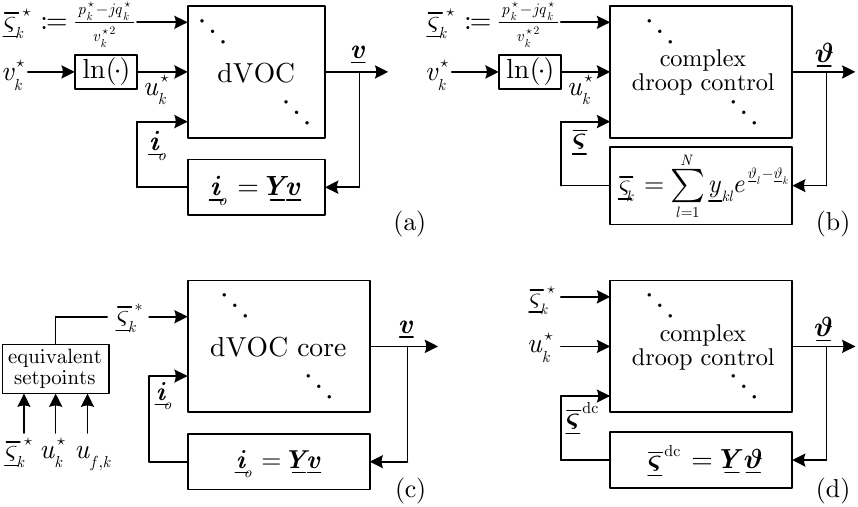}
  \caption{(a) The nonlinear system \eqref{eq-state-space-filter} formulated in complex-voltage coordinates. (b) The equivalent nonlinear system \eqref{eq-state-space-log-filter} formulated in complex-angle coordinates. (c) The linear system \eqref{eq-fast-system} with the dVOC core in \eqref{eq-dvoc-core} represents the fast synchronization dynamics. (d) The linearly approximated system \eqref{eq-slow-system} approximately aims to represent the slower voltage regulation dynamics.}
  \label{fig-system-models}
  \end{center}
\end{figure}

\subsection{Model Separation into Fast and Slow Dynamics}

The system in \eqref{eq-state-space-filter}, similar to dVOC in \eqref{eq-dvoc-filter}, comprises both linear and nonlinear parts. The linear part enables complex-frequency synchronization, while the nonlinear part establishes a voltage steady state. In the original dVOC article \cite{Colombino-dVOC}, these two parts were separated on the time scale by choosing $\alpha$ small, effectively down-tuning the voltage regulation. The low-pass filter in \eqref{eq-state-space-filter} also contributes to the time-scale separation. The time-scale separation between the linear and nonlinear parts allows us to treat their dynamics separately.

\textit{1) Linear System for Fast Dynamics:} On the fast time scale of complex-frequency synchronization dynamics, we neglect the slow dynamics of voltage regulation, i.e., $\vect{u}_f$ in \eqref{eq-state-space-filter} is considered constant. The dynamics of synchronization are then represented by the fast system as
\begin{equation}
\label{eq-fast-system}
    \dot {\phavec{v}} = \pha{\varpi}_0 {\phavec{v}} + \eta {e^{j\varphi}}\left( {\rm diag} \left(\phavecconj{\varsigma}^{\ast} \right) - \phamat{Y} \right) {\phavec{v}},
\end{equation}
where $\phavecconj{\varsigma}^{\ast} \coloneqq \phavecconj{\varsigma}^{\star} + \alpha {e^{-j\varphi}} \left(\vect{u}^{\star} - \vect{u}_f \right)$ denotes the equivalent power reference for the fast system. We note that the fast system \eqref{eq-fast-system} is linear in the \textit{complex-voltage coordinate frame}.

\textit{2) Linearly Approximated System for Slow Dynamics:} On the relatively slower time scale of voltage regulation, we assume that the faster complex-frequency synchronization is already completed. To facilitate a tractable linear analysis of voltage stabilization, we utilize the linear complex power-flow equations derived in Section~\ref{sec-dc-complex-power-flow}. This approximation is reasonable because the system already reaches a synchronous state, where the angle differences $\pha{\vartheta}_{lk}$ are quite small, allowing the linear power-flow approximation to hold well. When replacing $\phavecconj{\varsigma}$ in \eqref{eq-state-space-log-filter} with $\phavecconj{\varsigma}^{\rm dc} = \phamat{Y}\, \phavec{\vartheta}$, a system that approximately represents the slow dynamics can be given as
\begin{equation}
\label{eq-slow-system}
\begin{aligned}
    \dot {\phavec{\vartheta}} &= \pha{\varpi}_0 \mathbbm{1}_N + \eta {e^{j\varphi}} \left( {\phavecconj{\varsigma}^{\star} - \phamat{Y}\, \phavec{\vartheta}} \right) + \eta \alpha \left(\vect{u}^{\star} - \vect{u}_f \right) \\
    \tau \dot{\vect{u}}_f &= \re{\phavec{\vartheta}} - \vect{u}_f.
\end{aligned}
\end{equation}
We refer to this system as a ``linearly approximated system (for slow dynamics)". The linearly approximated system has been formulated in \textit{complex-angle coordinates} to align with the linear complex power-flow equations. In this coordinate frame, it can be seen that the system \eqref{eq-slow-system} is also linear.

\begin{remark}
Both the fast system \eqref{eq-fast-system} and the linearly approximated system \eqref{eq-slow-system} are illustrated in Fig.~\ref{fig-system-models}. We highlight that the choice of coordinates matters because both systems become nonlinear when viewed from a different coordinate frame. Moreover, the linearity of the developed models relies on the assumptions of time-scale separation and linear power flow. Notably, the linearity also benefits from the flexibility in control (particularly, dVOC) of power converters. We further emphasize that both linear models are in the so-called large-signal form, which differs from conventional small-signal models in the incremental form obtained through linearization around specific equilibria. Thanks to this large-signal linearity property, we can employ linear system analysis to yield actionable results on the phase-amplitude multivariable nonlinear stability issue, avoiding the challenges of directly analyzing the nonlinear dynamics of the system \eqref{eq-state-space-filter}.
\end{remark}

\begin{remark}
Time-scale separation is an overarching paradigm in both power systems and power electronics control. Specifically, the entire system under disturbances is supposed to first reach a synchronous state in terms of complex frequency and then a voltage steady state. This does fit empirical evidence observed in practical operation, where the deviation range for voltage is much larger than that for frequency (e.g., \myapprox $\pm 10\%$ for voltage vs. \myapprox $\pm 1.0\%$ for frequency) \cite{European-code}. Since synchronization dynamics are faster than voltage convergence, the two subproblems, complex-frequency synchronization and voltage stabilization, can be analyzed separately. The separated analysis will be accomplished with two mathematically independent systems, i.e., \eqref{eq-fast-system} and \eqref{eq-slow-system}. We remark that this analysis is distinct from the singular perturbation analysis. The standard singular perturbation approach fails when treating synchronization as fast dynamics since the synchronous state of the fast system \eqref{eq-fast-system} is non-isolated \cite[Th. 11.4]{Khalil-nonlinear} (i.e., if $\phavec{v}$ is a synchronous state, then $\pha{k}\,\phavec{v}$ will be too, $\forall \pha{k} \in \mathbb{C} \backslash \{0\}$). 
\end{remark}

The models in \eqref{eq-fast-system} and \eqref{eq-slow-system} exhibit a roughly similar level of abstraction when compared to other widely used models for nonlinear stability analysis in the literature (e.g., the simplified models for droop control \cite{Huang-droop} or virtual synchronous machine \cite{Ge-VSG}). Moreover, for cases, where the dynamics of the inner loops and power networks (faster than synchronization) are of interest, a frequency-domain stability criterion can be employed, as discussed in Section~\ref{sec-s-tomain-analys}. Alternatively, one can also resort to the nested singular perturbation approach \cite{Subotic-dVOC}, which is feasible when treating synchronization as slow dynamics (relatively slow compared to inner-loop dynamics).

\subsection{Model Extension to More General Power Systems}
\label{sec-hybrid-system}

Synchronous generators with inherent swing dynamics cannot achieve complex-frequency synchronization since the inherent interconnection variable is power, not the so-called normalized power. To extend the models in \eqref{eq-fast-system} and \eqref{eq-slow-system} to additionally incorporate the effect of synchronous generators, our focus still remains on the converter dynamics, while we incorporate the voltages generated by synchronous generators as a forced (i.e., exogenous) input to the converter system. To do so, we start by considering the entire network equation as
\begin{equation}
    \begin{bmatrix}
    \phavec{i}_{o} \\ \phavec{i}_{\rm SG}
    \end{bmatrix} =
    \begin{bmatrix}
        \phamat{Y} & \phamat{Y}_{\rm G} \\
        \phamat{Y}_{\rm G}\trans & \phamat{Y}_{\rm SG}
    \end{bmatrix}
    \begin{bmatrix}
        \phavec{v} \\ \phavec{v}_{\rm SG}
    \end{bmatrix},
\end{equation}
where $\phavec{v}_{\rm SG}$ and $\phavec{i}_{\rm SG}$ denote the voltages and currents of synchronous generator nodes, respectively. The admittance matrix is partitioned based on the converter and generator groups. The converter output currents are then expressed as $\phavec{i}_{o} = \phamat{Y}\, \phavec{v} + \phamat{Y}_{\rm G}\phavec{v}_{\rm SG}$. Likewise, the linear power flows are given as
$\phavecconj{\varsigma}^{\rm dc} = \phamat{Y}\, \phavec{\vartheta} + \phamat{Y}_{\rm G} \phavec{\vartheta}_{\rm SG}$,
where $\phavec{\vartheta}_{\rm SG}$ (related to $\phavec{v}_{\rm SG}$) represents the complex angles of synchronous generators.

We augment the system \eqref{eq-fast-system} with a forced input $\phavec{v}_{\rm SG}$, resulting in the following augmented system:
\begin{equation}
\label{eq-fast-system-aug}
    \dot {\phavec{v}} = \pha{\varpi}_0 {\phavec{v}} + \eta {e^{j\varphi}}\left( {\rm diag} \left(\phavecconj{\varsigma}^{\ast} \right) - \phamat{Y} \right) {\phavec{v}} - \eta {e^{j\varphi}} \phamat{Y}_{\rm G}\phavec{v}_{\rm SG}.
\end{equation}
Likewise, we augment the system \eqref{eq-slow-system} with a forced input $\phavec{\vartheta}_{\rm SG}$, which yields that
\begin{equation}
\label{eq-slow-system-aug}
\begin{aligned}
    \dot {\phavec{\vartheta}} &= \pha{\varpi}_0 \mathbbm{1}_N + \eta {e^{j\varphi}} \left( {\phavecconj{\varsigma}^{\star} - \phamat{Y}\, \phavec{\vartheta}} \right) + \eta \alpha \left(\vect{u}^{\star} - \vect{u}_f \right) \\
    & \quad\, - \eta {e^{j\varphi}} \phamat{Y}_{\rm G} \phavec{\vartheta}_{\rm SG}\\
    \tau \dot{\vect{u}}_f &= \re{\phavec{\vartheta}} - \vect{u}_f.
\end{aligned}
\end{equation}

Generally, converter dynamics are notably faster than synchronous generators. As such, we can assume that synchronous generators have fixed voltages and frequencies when investigating the stability of the converter dynamics alone. Specifically, we consider $\phavec{v}_{\rm SG}$ and $\phavec{\vartheta}_{\rm SG}$ as constant exogenous inputs of the linear systems \eqref{eq-fast-system-aug} and \eqref{eq-slow-system-aug}. The time-domain response of linear systems comprises both natural response and forced response. The natural response is of particular importance as it not only reflects the intrinsic behavior of the system but also determines the stability of the forced response under constant inputs. The results of this work are developed for the autonomous systems \eqref{eq-fast-system} and \eqref{eq-slow-system} (i.e., concerning the natural response). Our methods can be applied in power systems that also include synchronous generators to investigate the stability of the fast converter dynamics. However, the stability involving the interaction between converters and generators necessitates further research in the future. The above assumption and discussion will be validated in Section~\ref{sec-case-studies}.

\section{Stability Analysis Methods and Results}
\label{sec-stability-analys}

\subsection{Complex-Frequency Synchronization Analysis}
\label{sec-sync-stab-analys}

We first analyze complex-frequency synchronization stability by focusing on the fast system. The fast system \eqref{eq-fast-system} is a linear autonomous system with system matrix as
\begin{equation}
    \phamat{A} \coloneqq \pha{\varpi}_0 \mat{I}_N + \eta {e^{j\varphi}}\left( {\rm diag} \left(\phavecconj{\varsigma}^{\ast} \right) - \phamat{Y} \right).
\end{equation}

The complex eigenvalues of $\phamat{A}$ are denoted as $\pha{\lambda}_1$, $\pha{\lambda}_2$, $\cdots$, $\pha{\lambda}_N$, where $\re{\pha{\lambda}_1} \ge \re{\pha{\lambda}_2} \ge \cdots \ge \re{\pha{\lambda}_N}$. We term ${\pha{\lambda}_1}$ dominant eigenvalue and assume that $\pha{\lambda}_1$ has algebraic multiplicity one and $\re{\pha{\lambda}_1} > \re{\pha{\lambda}_2}$. This assumption is generic and reflects the actual requirement that power systems normally operate with only one fundamental-frequency component. For the dominant eigenvalue $\pha{\lambda}_1$, we denote by ${\phavec{\psi}_1\trans}$ and ${\phavec{\phi}_1}$ the left and right eigenvectors, respectively, i.e.,
\begin{equation*}
    \phamat{A} \phavec{\phi}_1 = \pha{\lambda}_1 \phavec{\phi}_1, \quad {\phavec{\psi}_1\trans} \phamat{A} = \pha{\lambda}_1 {\phavec{\psi}_1\trans},
\end{equation*}
where simply $\phavec{\psi}_1 = \phavec{\phi}_1$ because $\phamat{A}$ is symmetric. We provide a \textit{spectral stability condition} for complex-frequency synchronization of the fast system \eqref{eq-fast-system}.

\begin{condition}
\label{condition-suff-nece}
(Spectral condition): The entries of ${\phavec{\phi}_1}$ are nonzero, and $\re{\pha{\lambda}_2} < 0$.
\end{condition}

\begin{theorem}
\label{theorem-suff-nece}
Under Condition~\ref{condition-suff-nece}, the system \eqref{eq-fast-system} achieves complex-frequency synchronization at ${\pha{\lambda}_1}$ for almost all initial state ${\phavec{v}_0}$ except for the non-generic initial condition ${\phavec{\psi}_1\trans}{\phavec{v}_0} = 0$ where the voltages $\phavec{v}$ converge to $\mathbbb{0}_N$.
\end{theorem}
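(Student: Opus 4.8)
The plan is to exploit that the fast system \eqref{eq-fast-system} is \emph{linear and autonomous}, $\dot{\phavec{v}} = \phamat{A}\phavec{v}$, with closed-form solution $\phavec{v}(t) = e^{\phamat{A}t}\phavec{v}_0$, and to read off the asymptotics from the spectral decomposition tied to the dominant eigenvalue. Since $\pha{\lambda}_1$ has algebraic multiplicity one, the associated Riesz projector is $\phamat{P}_1 = \phavec{\phi}_1 \phavec{\psi}_1\trans$, where I normalize $\phavec{\psi}_1\trans \phavec{\phi}_1 = 1$ (always possible at a simple eigenvalue, and recall $\phavec{\psi}_1 = \phavec{\phi}_1$ since $\phamat{A}$ is complex symmetric). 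First I would split the propagator as $e^{\phamat{A}t} = e^{\pha{\lambda}_1 t}\phamat{P}_1 + \phamat{R}(t)$ with $\phamat{R}(t) = e^{\phamat{A}t}(\mat{I}_N - \phamat{P}_1)$ collecting all contributions from $\pha{\lambda}_2,\dots,\pha{\lambda}_N$. The key estimate is $\norm{e^{-\pha{\lambda}_1 t}\phamat{R}(t)} \to 0$: restricted to the range of $\mat{I}_N - \phamat{P}_1$, the matrix $\phamat{A}$ has spectrum $\{\pha{\lambda}_2,\dots,\pha{\lambda}_N\}$, so standard matrix-exponential bounds give $\norm{e^{-\pha{\lambda}_1 t}\phamat{R}(t)} \le C\,(1+t^{N-1})\,e^{(\re{\pha{\lambda}_2}-\re{\pha{\lambda}_1})t} \to 0$ by $\re{\pha{\lambda}_2} < \re{\pha{\lambda}_1}$. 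The polynomial factor, which appears only if some subdominant eigenvalue is defective, is harmless thanks to this strict spectral gap. Differentiating $\phamat{R}(t) = e^{\phamat{A}t}(\mat{I}_N - \phamat{P}_1)$ gives $\dot{\phamat{R}}(t) = \phamat{A}\phamat{R}(t)$, so $e^{-\pha{\lambda}_1 t}\dot{\phamat{R}}(t)=\phamat{A}\,e^{-\pha{\lambda}_1 t}\phamat{R}(t) \to \mathbbb{0}$ as well.

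For the generic case $c \coloneqq \phavec{\psi}_1\trans\phavec{v}_0 \neq 0$, write $\phavec{v}(t) = c\,e^{\pha{\lambda}_1 t}\phavec{\phi}_1 + \phavec{r}(t)$ with $\phavec{r}(t) = \phamat{R}(t)\phavec{v}_0$, so that both $e^{-\pha{\lambda}_1 t}\phavec{r}(t) \to \mathbbb{0}_N$ and $e^{-\pha{\lambda}_1 t}\dot{\phavec{r}}(t) \to \mathbbb{0}_N$. Denoting by $r_k(t)$ and $\dot r_k(t)$ the $k$th entries of $\phavec{r}(t)$ and $\dot{\phavec{r}}(t)$, the complex frequency of node $k$ reads
\[
\dot{\pha{\vartheta}}_k = \frac{\dot{\pha{v}}_k}{\pha{v}_k} = \frac{c\,\pha{\lambda}_1 (\phavec{\phi}_1)_k + e^{-\pha{\lambda}_1 t}\dot r_k(t)}{c\,(\phavec{\phi}_1)_k + e^{-\pha{\lambda}_1 t} r_k(t)}.
\]
Because every entry $(\phavec{\phi}_1)_k$ is nonzero by Condition~\ref{condition-suff-nece} and $c\neq 0$, the denominator tends to the nonzero limit $c\,(\phavec{\phi}_1)_k$ (in particular $\pha{v}_k(t)\neq 0$ for large $t$, so $\dot{\pha{\vartheta}}_k$ is well defined), while the numerator tends to $c\,\pha{\lambda}_1(\phavec{\phi}_1)_k$; hence $\dot{\pha{\vartheta}}_k \to \pha{\lambda}_1$ for every $k$. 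This is precisely complex-frequency synchronization at the common value $\pha{\varpi}_{\rm sync} = \pha{\lambda}_1$ in the sense of Definition~\ref{def-complex-freq-sync-def}.

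It remains to treat the exceptional initial states and to argue genericity. If $\phavec{\psi}_1\trans\phavec{v}_0 = 0$, the dominant term drops out and $\phavec{v}(t) = \phamat{R}(t)\phavec{v}_0$; since every remaining eigenvalue satisfies $\re{\pha{\lambda}_i} \le \re{\pha{\lambda}_2} < 0$, the same matrix-exponential bound yields $\norm{\phavec{v}(t)} \to 0$, i.e., $\phavec{v}\to\mathbbb{0}_N$. The exceptional set $\{\phavec{v}_0 : \phavec{\psi}_1\trans\phavec{v}_0 = 0\}$ is a proper subspace of $\mathbb{C}^N$ (because $\phavec{\psi}_1\neq\mathbbb{0}_N$), hence Lebesgue-null, which justifies ``almost all initial states''. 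The step I expect to demand the most care is the uniform decay estimate on $e^{-\pha{\lambda}_1 t}\phamat{R}(t)$ and its derivative: one must confirm that Jordan blocks at the subdominant eigenvalues contribute only polynomial-in-$t$ growth, which the strict gap $\re{\pha{\lambda}_1}-\re{\pha{\lambda}_2}>0$ overwhelms, so that forming the ratio that defines the complex frequency is legitimate and its limit is insensitive to those subdominant terms.
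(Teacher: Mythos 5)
Your proposal is correct and follows essentially the same route as the paper: both isolate the dominant modal response $\phavec{\phi}_1\phavec{\psi}_1\trans\phavec{v}_0 e^{\pha{\lambda}_1 t}$, use the spectral gap $\re{\pha{\lambda}_1} > \re{\pha{\lambda}_2}$ to show the remaining modes are negligible relative to it, conclude $\dot{\pha{v}}_k/\pha{v}_k \to \pha{\lambda}_1$ (the paper passes through $\lim_{t\to\infty}\pha{v}_k/\pha{v}_l = \pha{\phi}_{k1}/\pha{\phi}_{l1}$ and the eigenvalue equation, while you differentiate the explicit solution directly, a cosmetic difference), and treat the non-generic set $\phavec{\psi}_1\trans\phavec{v}_0 = 0$ separately via $\re{\pha{\lambda}_2}<0$. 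Your explicit handling of possible Jordan blocks at subdominant eigenvalues and of the normalization $\phavec{\psi}_1\trans\phavec{\phi}_1 = 1$ makes the argument slightly more self-contained, but it is the same proof.
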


The proof is provided in the Appendix. Under Condition~\ref{condition-suff-nece}, the response of the linear system \eqref{eq-fast-system} is dominated by the dominant modal response ${\phavec{\phi}_1}{{\phavec{\psi}_1\trans}{\phavec{v}_0}}{e^{{\pha{\lambda}_1}t}}$. The other modes decay to zero, and consequently, the system synchronizes at $\pha{\lambda}_1$. In Condition~\ref{condition-suff-nece}, if one entry of ${\phavec{\phi}_1}$ is zero (this is not generic), then the dominant modal response of the corresponding node will be zero. If the initial state ${\phavec{v}_0}$ satisfies ${\phavec{\psi}_1\trans}{\phavec{v}_0} = 0$ (this is also not generic), then the overall dominant modal response will be zero. Both cases are not expected by complex-frequency synchronization. Hence, previous work \cite{Colombino-dVOC,Gross-dVOC,Subotic-dVOC}
adopted the terminology of almost global synchronization.

Condition~\ref{condition-suff-nece} relies on the computation of eigenvalues of $\phamat{A}$. Although mathematically intuitive, it may lack physical intuition. To overcome this, we introduce Condition~\ref{condition-suff}, a \textit{parametric stability condition} that allows for gaining physical insights into Condition 1 for complex-frequency synchronization.

\begin{condition}
\label{condition-suff}
(Parametric condition): There exists a maximal phase difference $\bar{\delta} \in [0, \pi/2)$ and a maximal voltage amplitude ratio deviation $\bar{\gamma} \in (0, 1)$ such that $
    \lim\limits_{t \to \infty}\abs{\theta_k - \theta_l} \le \bar{\delta}$ and $ \lim\limits_{t \to \infty}\bigl \lvert \frac{v_k}{v_l} - 1\bigr \rvert \le \bar{\gamma}$
hold for any node pair $(k,l)$ in any synchronous state. Moreover, the power references $\phaconj{\varsigma}_k^{\ast}$, the rotation operator ${e^{j\varphi}}$, and the network admittance matrix $\phamat{Y}$ satisfy
\begin{equation}
\label{eq-suff-formula}
    \max\limits_k \re{{e^{j\varphi}}\phaconj{\varsigma}_k^{\ast}} < \tfrac{{1 + \cos \bar {\delta}}}{2}{\left(1 - \bar{\gamma} \right)^2} {\lambda _2} \bigl(\re{{e^{j\varphi}} \phamat{Y}}\bigr),
\end{equation}
where $\re{{e^{j\varphi}} \phamat{Y}}$ is formed by the real part of the entries of ${e^{j\varphi}} \phamat{Y}$, and it is therefore a Laplacian matrix, and ${\lambda _2}(\cdot)$ denotes the second smallest eigenvalue (a positive real number).
\end{condition}

\begin{theorem}
\label{theorem-suff}
Under Condition~\ref{condition-suff}, the system \eqref{eq-fast-system} achieves complex-frequency synchronization at ${\pha{\lambda}_1}$ for almost all initial state ${\phavec{v}_0}$ except for the non-generic initial condition ${\phavec{\psi}_1\trans}{\phavec{v}_0} = 0$ where the voltages $\phavec{v}$ converge to $\mathbbb{0}_N$.
\end{theorem}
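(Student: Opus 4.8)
The plan is to show that Condition~\ref{condition-suff} (parametric) implies Condition~\ref{condition-suff-nece} (spectral) and then invoke Theorem~\ref{theorem-suff-nece}, whose conclusion and exceptional set coincide verbatim with those claimed here. Condition~\ref{condition-suff-nece} has two ingredients: (a) the dominant eigenvector $\phavec{\phi}_1$ has no zero entry, and (b) $\re{\pha{\lambda}_2} < 0$. Since strict dominance $\re{\pha{\lambda}_1} > \re{\pha{\lambda}_2}$ is already a standing assumption, establishing (a) and (b) from Condition~\ref{condition-suff} suffices. Part (a) is the easy one: the hypothesis $\lvert v_k/v_l - 1\rvert \le \bar{\gamma} < 1$ in any synchronous state forces all amplitudes of the synchronous profile $\phavec{\phi}_1$ to be mutually comparable and hence strictly positive, so no entry of $\phavec{\phi}_1$ vanishes.

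The core of the argument is part (b). First I would isolate the real-part information of $\phamat{A}$ through its Hermitian part. Because $\pha{\varpi}_0 = j\omega_0$ is purely imaginary and $\phamat{Y}$ is complex symmetric, a short computation gives $\tfrac12(\phamat{A}+\phamat{A}\hermconj) = \eta(\mat{D} - \mat{L})$, where $\mat{D} \coloneqq \mathrm{diag}(\{\re{e^{j\varphi}\phaconj{\varsigma}_k^{\ast}}\}_{k=1}^N)$ and $\mat{L} \coloneqq \re{e^{j\varphi}\phamat{Y}}$ is precisely the real Laplacian appearing in \eqref{eq-suff-formula}. The field-of-values inclusion then places every eigenvalue real part below $\eta\,\lambda_{\max}(\mat{D}-\mat{L})$, which controls $\re{\pha{\lambda}_1}$ but not yet $\re{\pha{\lambda}_2}$. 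To reach the second eigenvalue I would pass to coordinates adapted to the synchronous mode, $\phavec{z} = \mathrm{diag}(\phavec{\phi}_1)^{-1}\phavec{v}$, in which the synchronous direction becomes $\mathbbm{1}_N$, and then estimate the dynamics transverse to $\mathbbm{1}_N$. Restricting the relevant quadratic form to $\mathbbm{1}_N^{\perp}$ turns the coupling term into a weighted-Laplacian form whose edge weights carry the synchronous amplitude ratios $v_l/v_k$ and phase differences $\theta_l-\theta_k$; bounding these weights from below using $\bar{\gamma}$ and $\bar{\delta}$ replaces the bare algebraic connectivity $\lambda_2(\mat{L})$ by the discounted value $\tfrac{1+\cos\bar{\delta}}{2}(1-\bar{\gamma})^2\lambda_2(\mat{L})$. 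This yields the estimate $\re{\pha{\lambda}_2} \le \eta\bigl(\max_k \re{e^{j\varphi}\phaconj{\varsigma}_k^{\ast}} - \tfrac{1+\cos\bar{\delta}}{2}(1-\bar{\gamma})^2\lambda_2(\mat{L})\bigr)$, which is strictly negative exactly under \eqref{eq-suff-formula}. With (a) and (b) in hand, Theorem~\ref{theorem-suff-nece} delivers the claim.

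I expect the main obstacle to be the passage from the Hermitian part to the \emph{second} eigenvalue of $\phamat{A}$. Because $\phamat{A}$ is complex symmetric rather than Hermitian, its eigenvalue real parts are not governed by a Courant--Fischer min--max, so the clean Weyl bound (the second-largest eigenvalue of $\mat{D}-\mat{L}$ is at most $-\lambda_2(\mat{L}) + \max_k \re{e^{j\varphi}\phaconj{\varsigma}_k^{\ast}}$) cannot be transferred to $\re{\pha{\lambda}_2}$ verbatim. The whole weight of the proof sits in making the transverse, eigenvector-adapted estimate rigorous and, in particular, in deriving the discount factor $\tfrac{1+\cos\bar{\delta}}{2}(1-\bar{\gamma})^2$ from the phase spread $\bar{\delta}$ and amplitude spread $\bar{\gamma}$ of $\phavec{\phi}_1$, since the similarity transform by $\mathrm{diag}(\phavec{\phi}_1)$ does not preserve the field of values and the transverse subspace is not invariant. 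A Lyapunov-type bound on the transverse component, rather than a naive spectral one, is the natural device to close this gap.
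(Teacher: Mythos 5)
Your overall strategy matches the paper's: show Condition~\ref{condition-suff} $\Rightarrow$ Condition~\ref{condition-suff-nece} by a Lyapunov-type estimate on the component of $\phavec{v}$ transverse to the dominant eigendirection, using the Hermitian part $\tfrac12(\phamat{A}+\phamat{A}\hermconj)=\eta\bigl(\re{e^{j\varphi}\,{\rm diag}(\phavecconj{\varsigma}^{\ast})}-\re{e^{j\varphi}\phamat{Y}}\bigr)$ and extracting the discount factor $\tfrac{1+\cos\bar\delta}{2}(1-\bar\gamma)^2$ from the phase and amplitude spread of $\phavec{\phi}_1$; your final inequality is the one the paper obtains. (The paper runs the logic in the opposite order --- it proves convergence to $\mathcal{S}={\rm span}(\phavec{\phi}_1)$ and synchronization at $\pha{\lambda}_1$ directly, and only afterwards notes that this forces $\re{\pha{\lambda}_k}<0$ for $k\ge2$, which is then used for the exceptional set $\phavec{\psi}_1\trans\phavec{v}_0=0$ --- but either order is sound.)

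The genuine gap is the one you flag yourself and do not close: the transverse estimate is asserted rather than derived, and the device you propose for it (the similarity $\phavec{z}={\rm diag}(\phavec{\phi}_1)^{-1}\phavec{v}$ followed by a weighted-Laplacian bound on $\mathbbm{1}_N^{\perp}$) runs into exactly the obstructions you name: the transform is not unitary, so it distorts the Hermitian part, and the transverse subspace is not invariant. The paper closes this without any change of coordinates. With the orthogonal projector $\phamat{P}=\mat{I}_N-\phavec{\phi}_1\phavec{\phi}_1\hermconj/(\phavec{\phi}_1\hermconj\phavec{\phi}_1)$ and $V=\tfrac12\phavec{v}\hermconj\phamat{P}\,\phavec{v}$, the eigenvector relation gives $\phamat{A}(\mat{I}_N-\phamat{P})=\pha{\lambda}_1(\mat{I}_N-\phamat{P})$ and hence $\phamat{P}\phamat{A}=\phamat{P}\phamat{A}\phamat{P}$: the transverse component evolves autonomously even though $\mathcal{S}^{\perp}$ is not $\phamat{A}$-invariant, and $\dot V=\tfrac12\phavec{v}\hermconj\phamat{P}(\phamat{A}+\phamat{A}\hermconj)\phamat{P}\,\phavec{v}$ with the $\pha{\lambda}_1$-direction exactly cancelled. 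The discount factor then comes not from weighted edge weights but from comparing $\phamat{P}$ with $\mat{\Pi}=\mat{I}_N-\mathbbm{1}_N\mathbbm{1}_N\trans/N$: one bounds $\phavec{v}\hermconj\phamat{P}\re{e^{j\varphi}\phamat{Y}}\phamat{P}\,\phavec{v}\ge\lambda_2\bigl(\re{e^{j\varphi}\phamat{Y}}\bigr)\norm{\mat{\Pi}\phamat{P}\,\phavec{v}}^2$ and, via Cauchy--Schwarz, $\norm{\mat{\Pi}\phamat{P}\,\phavec{v}}^2\ge\norm{\phavec{v}}_{\mathcal{S}}^2\,\bigl\lvert\mathbbm{1}_N\trans\phavec{\phi}_1\bigr\rvert^2/(N\phavec{\phi}_1\hermconj\phavec{\phi}_1)$, and it is this projector overlap that is bounded below by $\tfrac{1+\cos\bar\delta}{2}(1-\bar\gamma)^2$ using the angle and amplitude constraints on the entries of $\phavec{\phi}_1$. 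Without the identity $\phamat{P}\phamat{A}=\phamat{P}\phamat{A}\phamat{P}$ (or an equivalent), your claimed bound on $\re{\pha{\lambda}_2}$ is not established, so the proof as written is incomplete; with it, your plan goes through and coincides with the paper's.
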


The proof is provided in the Appendix. Condition~\ref{condition-suff} first places a constraint on the phase differences and voltage ratios in the synchronous state. This constraint is intuitively reasonable for power system operation since a healthy synchronous operating point is generally characterized by rather small phase differences and voltage ratios. Condition~\ref{condition-suff} then bounds the power setpoints in terms of the algebraic connectivity ${\lambda _2}(\cdot)$ of the graph corresponding to the real part of the rotated network admittance matrix. \textit{This bound quantifies a few well-known engineering insights}: the network should be sufficiently well connected (as reflected by ${\lambda _2}(\cdot) > 0$) and should not be heavily loaded (cf. the left-hand side of \eqref{eq-suff-formula}), and the rotation angle $\varphi$ should match the impedance angle as closely as possible \cite{De-coupling}. In terms of practical application, Condition~\ref{condition-suff} is actionable in the sense that synchronization stability assessment can be performed in a decentralized fashion with the right-hand side network connectivity information in \eqref{eq-suff-formula}.

Previous dVOC studies \cite{Colombino-dVOC,Gross-dVOC,Subotic-dVOC} have made two restrictive assumptions, i.e., a homogeneous $r/\ell$ ratio in the network and consistent setpoints with power-flow equations. In contrast, \textit{our analysis from the perspective of complex-frequency synchronization drops both assumptions}. Namely, we address a network with arbitrary impedance parameters and consider arbitrary power and voltage setpoints. These two aspects are of significance. First, an actual power network is generally composed of multiple-level sub-networks that are with different impedance characteristics. Second, an actual system almost always operates at a drooped point as a result of load and generation variations.

\subsection{Voltage Stabilization Analysis}
\label{sec-volt-stab-analys}

We show that for the linearly approximated system \eqref{eq-slow-system}, frequency synchronization and voltage stabilization are guaranteed. We rewrite \eqref{eq-slow-system} in real-valued variables as
\begin{equation}
\label{eq-slow-system-real-coeff}
\begin{aligned}
    \dot {\vect{u}} &= \eta \vect{\sigma}^{\prime\star} - \eta \mat{G}^{\prime} \vect{u} + \eta \mat{B}^{\prime} \vect{\theta} + \eta \alpha \left(\vect{u}^{\star} - \vect{u}_f \right) \\
    \dot {\vect{\theta}} &= {\omega}_0 \mathbbm{1}_N + \eta \vect{\rho}^{\prime\star} - \eta \mat{B}^{\prime} \vect{u} - \eta \mat{G}^{\prime} \vect{\theta}\\
    \tau \dot{\vect{u}}_f &= \vect{u} - \vect{u}_f,
\end{aligned}
\end{equation}
where $\vect{u} + j\vect{\theta} = \phavec{\vartheta}$, $\mat{G}^{\prime} + j\mat{B}^{\prime} = {e^{j\varphi}}\phamat{Y}$, and $\vect{\sigma}^{\prime\star} + j\vect{\rho}^{\prime\star} = {e^{j\varphi}} \phavecconj{\varsigma}^{\star}$. In a steady state, $\dot {\vect{u}} = \mathbbb{0}_N$ but $\dot {\vect{\theta}} \neq \mathbbb{0}_N$. To denote the steady-state equilibrium, we define the center-of-angle coordinate as $\theta_0 \coloneqq \mathbbm{1}_N\trans \vect{\theta}/N$. Then ${\dot{\theta}_0} = {\omega _0} + \eta \mathbbm{1}_N\trans \vect{\rho}^{\prime\star}/N$. We denote the angle deviations from $\theta_0$ as $\vect{\delta} \coloneqq \vect{\theta} - \mathbbm{1}_N \theta_0$. The system \eqref{eq-slow-system-real-coeff} is transformed to the center-of-angle coordinate frame as
\begin{equation}
\label{eq-slow-system-coa}
\begin{aligned}
    \dot {\vect{u}} &= \eta \vect{\sigma}^{\prime\star} - \eta \mat{G}^{\prime} \vect{u} + \eta \mat{B}^{\prime} \vect{\delta} + \eta \alpha \left(\vect{u}^{\star} - \vect{u}_f \right) \\
    \dot {\vect{\delta}} &= \eta \left( \vect{\rho}^{\prime\star} - \mathbbm{1}_N \mathbbm{1}_N\trans \vect{\rho}^{\prime\star}/N \right) - \eta \mat{B}^{\prime} \vect{u} - \eta \mat{G}^{\prime} \vect{\delta}\\
    \tau \dot{\vect{u}}_f &= \vect{u} - \vect{u}_f.
\end{aligned}
\end{equation}
We show in Lemma~\ref{lemma-slow-system-equili} in the Appendix that the system \eqref{eq-slow-system-coa} has a unique equilibrium, which is denoted as $[\vect{u}_s\trans, \vect{\delta} _s\trans, \vect{u}_s\trans]\trans$.

\begin{theorem}
\label{theorem-volt-stab-guarant}
    The system \eqref{eq-slow-system-coa} is globally asymptotically stable with respect to the equilibrium $[\vect{u}_s\trans, \vect{\delta} _s\trans, \vect{u}_s\trans]\trans$. The voltages $\vect{u}$ and the frequencies $\dot {\vect{\theta}}$ of the linearly approximated system \eqref{eq-slow-system-real-coeff} converge to $\vect{u}_s$ and ${\omega _0} + \eta \mathbbm{1}_N\trans \im{{e^{j\varphi}} \phavecconj{\varsigma}^{\star}}/N$, respectively.
\end{theorem}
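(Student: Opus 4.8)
The plan is to exploit the fact that \eqref{eq-slow-system-coa} is a \emph{linear} time-invariant system evolving on the invariant subspace $\{\mathbbm{1}_N\trans \vect{\delta} = 0\}$ (which is preserved because $\mathbbm{1}_N\trans \dot{\vect{\delta}} = 0$), so that global asymptotic stability reduces to Hurwitzness of the system matrix on that subspace. I would establish this constructively via a weighted quadratic Lyapunov function rather than by computing eigenvalues. Writing $\mat{G}' = \re{e^{j\varphi}\phamat{Y}}$ and $\mat{B}' = \im{e^{j\varphi}\phamat{Y}}$, I would first record the structural facts that both are real symmetric (since $\phamat{Y}$ is symmetric), that $\mat{G}'$ is a connected Laplacian with $\mat{G}'\mathbbm{1}_N = \mathbbb{0}_N$ and $\ker \mat{G}' = \mathrm{span}(\mathbbm{1}_N)$, and that $\mat{B}'\mathbbm{1}_N = \mathbbb{0}_N$, hence $\mathbbm{1}_N\trans \mat{B}' = \mathbbm{1}_N\trans \mat{G}' = \mathbbb{0}_N\trans$. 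Then I would invoke Lemma~\ref{lemma-slow-system-equili} to shift to the equilibrium, introducing errors $\tilde{\vect{u}} = \vect{u} - \vect{u}_s$, $\tilde{\vect{\delta}} = \vect{\delta} - \vect{\delta}_s$, and $\tilde{\vect{u}}_f = \vect{u}_f - \vect{u}_s$ obeying the homogeneous dynamics from \eqref{eq-slow-system-coa}.

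The core step is to propose the Lyapunov candidate
\begin{equation*}
    V = \tfrac{1}{2}\norm{\tilde{\vect{u}}}^2 + \tfrac{1}{2}\norm{\tilde{\vect{\delta}}}^2 + \tfrac{\eta\alpha\tau}{2}\norm{\tilde{\vect{u}}_f}^2,
\end{equation*}
which is positive definite and radially unbounded since $\eta,\alpha,\tau > 0$. Differentiating along the error dynamics, I expect two cancellations to do all the work: the $\mat{B}'$-coupling between $\tilde{\vect{u}}$ and $\tilde{\vect{\delta}}$ vanishes because $\mat{B}'$ is symmetric (so $\tilde{\vect{u}}\trans \mat{B}' \tilde{\vect{\delta}}$ and $\tilde{\vect{\delta}}\trans \mat{B}' \tilde{\vect{u}}$ are equal and enter with opposite signs), and the cross terms between $\tilde{\vect{u}}$ and $\tilde{\vect{u}}_f$ cancel precisely because the weight $\eta\alpha\tau$ on the filter state makes $\eta\alpha\tau \cdot \tfrac{1}{\tau} = \eta\alpha$ match the coefficient of the $-\eta\alpha\,\tilde{\vect{u}}\trans \tilde{\vect{u}}_f$ term. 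What remains is the sign-definite expression
\begin{equation*}
    \dot{V} = -\eta\,\tilde{\vect{u}}\trans \mat{G}' \tilde{\vect{u}} - \eta\,\tilde{\vect{\delta}}\trans \mat{G}' \tilde{\vect{\delta}} - \eta\alpha\norm{\tilde{\vect{u}}_f}^2 \le 0.
\end{equation*}

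Since $\dot V$ is only negative semidefinite, the conclusion follows from LaSalle's invariance principle. On the set $\{\dot V = 0\}$ one has $\tilde{\vect{u}}_f = \mathbbb{0}_N$ and $\tilde{\vect{u}}, \tilde{\vect{\delta}} \in \ker \mat{G}' = \mathrm{span}(\mathbbm{1}_N)$; the center-of-angle constraint $\mathbbm{1}_N\trans \tilde{\vect{\delta}} = 0$ then forces $\tilde{\vect{\delta}} = \mathbbb{0}_N$, leaving only $\tilde{\vect{u}} = a\,\mathbbm{1}_N$. I would then characterize the largest invariant set inside $\{\dot V = 0\}$: on this set the filter equation gives $\dot{\tilde{\vect{u}}}_f = \tfrac{1}{\tau}(a\,\mathbbm{1}_N - \mathbbb{0}_N)$, so requiring $\tilde{\vect{u}}_f$ to remain zero forces $a = 0$ and hence $\tilde{\vect{u}} = \mathbbb{0}_N$. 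This yields global asymptotic stability to $[\vect{u}_s\trans, \vect{\delta}_s\trans, \vect{u}_s\trans]\trans$, so $\vect{u} \to \vect{u}_s$. The frequency claim follows because $\tilde{\vect{\delta}} \to \mathbbb{0}_N$ gives $\dot{\vect{\theta}} \to \mathbbm{1}_N \dot{\theta}_0$, and projecting the $\vect{\theta}$-equation of \eqref{eq-slow-system-real-coeff} onto $\mathbbm{1}_N$ using $\mathbbm{1}_N\trans \mat{B}' = \mathbbm{1}_N\trans \mat{G}' = \mathbbb{0}_N\trans$ yields $\dot{\theta}_0 = \omega_0 + \eta\,\mathbbm{1}_N\trans \im{e^{j\varphi}\phavecconj{\varsigma}^{\star}}/N$.

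The main obstacle, and the only genuinely nontrivial point, is the marginal uniform-voltage mode $\tilde{\vect{u}} = a\,\mathbbm{1}_N$ in $\ker \mat{G}'$: the Laplacian dissipation supplied by $\mat{G}'$ does not damp it, so $\dot V$ cannot be made negative definite and a strict Lyapunov decrease is unavailable. The resolution is that this mode is pinned down only indirectly, through the low-pass filter via the LaSalle step; accordingly, the proof hinges on the connectivity of the graph associated with $\mat{G}' = \re{e^{j\varphi}\phamat{Y}}$ (guaranteeing $\ker \mat{G}' = \mathrm{span}(\mathbbm{1}_N)$) and on the exact filter weight $\eta\alpha\tau$ that produces the cross-term cancellation.
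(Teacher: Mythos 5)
Your proposal is correct and follows essentially the same route as the paper: the same error coordinates from Lemma~\ref{lemma-slow-system-equili}, the identical weighted Lyapunov function $\tfrac{1}{2}\norm{\tilde{\vect{u}}}^2 + \tfrac{1}{2}\norm{\tilde{\vect{\delta}}}^2 + \tfrac{1}{2}\eta\alpha\tau\norm{\tilde{\vect{u}}_f}^2$ yielding the same negative-semidefinite derivative, and the same LaSalle argument using $\mathbbm{1}_N\trans\tilde{\vect{\delta}}=0$ and the filter equation to kill the uniform mode in $\ker\mat{G}'$. You actually spell out the largest-invariant-set step in slightly more detail than the paper does; no gaps.
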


The proof is provided in the Appendix. The statement of ``global stability" only refers to the differential equation \eqref{eq-slow-system-coa} itself. For voltage stabilization of converters, we particularly consider the case where the angle differences are sufficiently small, ensuring the validity of the linearly approximated system \eqref{eq-slow-system-real-coeff}. In this regard, Throrem~\ref{theorem-volt-stab-guarant} provides a guarantee for voltage stabilization of the converter-based system. We interpret the steady-state frequency deviation $\eta \mathbbm{1}_N\trans \im{{e^{j\varphi}} \phavecconj{\varsigma}^{\star}}/N$ as the droop gain multiplied by the average of all the individual rotated active power setpoints. We relate this result to the fact that the linear complex power flow is lossless, cf. Remark~\ref{remark-lossless-dc-power}.

\subsection{Frequency-Domain Stability Criteria}
\label{sec-s-tomain-analys}

We further present frequency-domain admittance modeling and stability criteria. The frequency-domain results are also for global stability, as the developed models are already linear.

\textit{1) Admittance Model of the Fast System:} When deriving the converter equivalent admittance, it is possible to further incorporate the inner-loop dynamics and the $LC$ filter dynamics. The converter admittance can then be derived from $\pha{y}_{{\rm equ},k}(s) \coloneqq -\pha{i}_{o,k}(s)/\pha{v}_k(s)$. For instance, considering solely the dynamics of the dVOC core, we have $\pha{y}_{{\rm equ},k}(s) = (s-j\omega_0){e^{-j\varphi}}/\eta - \phaconj{\varsigma}_k^{\ast}$. Further, when seen from any node $k$, the network-side aggregated admittance, $\pha{y}_{{\rm agg},k}(s)$, can be derived. The admittance is represented as a complex-coefficient transfer function since we assume a three-phase balanced system and an admittance matrix $\bigl[\begin{smallmatrix}
    a(s) & -b(s) \\ b(s) & a(s)
\end{smallmatrix} \bigr]$ is equivalent to $a(s) + jb(s)$ \cite{Xin-SISO-equ}. We remark that $a(s) + jb(s)$ is also known as generalized admittance, see \cite{Xin-SISO-equ} for further details.

\textit{2) Admittance Model of the Linearly Approximated System:} We employ the reduced static network since the system \eqref{eq-slow-system} has been defined with static power flow. For each converter node $k$ in \eqref{eq-slow-system}, we define the rotated power setpoint as $\sigma_{k}^{\prime\star} + j \rho_{k}^{\prime\star} \coloneqq {e^{j\varphi}} \phaconj{\varsigma}_k^{\star}$, cf. the vector form $\vect{\sigma}^{\prime\star} + j\vect{\rho}^{\prime\star} = {e^{j\varphi}} \phavecconj{\varsigma}^{\star}$ in \eqref{eq-slow-system-real-coeff}. Similarly, we define the rotated power flow as $\sigma _k^{\prime {\rm dc}} + j\rho _k^{\prime {\rm dc}} \coloneqq e^{j\varphi}\phaconj{\varsigma}_k^{\rm dc}$. The nodal admittance model is then obtained as
\begin{equation}
\label{eq-adm-equ-dc}
    \underbrace{\frac{1}{\eta} \begin{bmatrix} s + \tfrac{\eta \alpha}{{\tau s + 1}} & 0 \\ 0 & s \end{bmatrix}}_{\mat{Y}_{{\rm equ},k}^{\rm dc}(s)} \begin{bmatrix} {{u_k}} \\ {{\theta _k}} \end{bmatrix} + \begin{bmatrix} {\sigma _k^{\prime {\rm dc}}} \\ {\rho _k^{\prime {\rm dc}}} \end{bmatrix} = \begin{bmatrix} {\sigma_{k}^{\prime\star} + \alpha u_k^{\star}} \\ {\rho_{k}^{\prime\star} + {\omega _0}/{\eta}} \end{bmatrix},
\end{equation}
where $\mat{Y}_{{\rm equ},k}^{\rm dc}(s)$ denotes the converter equivalent admittance and $[{\sigma_{k}^{\prime\star} + \alpha u_k^{\star}},\, {\rho_{k}^{\prime\star} + {\omega _0}/{\eta}}] {\trans}$ denotes the equivalent reference. Further, we can derive the network-side aggregated admittance, denoted by $\mat{Y}_{{\rm agg},k}^{\rm dc}(s)$. These admittance models are represented as real-coefficient $2 \times 2$ transfer function matrices, accounting for the existence of the voltage regulation term.

\textit{3) Admittance-Based Criterion for Synchronization Stability:} Consider $\pha{l}_k(s) \coloneqq \pha{y}_{{\rm agg},k}(s) / \pha{y}_{{\rm equ},k}(s)$ as an admittance ratio. Let $N_1$ be the number of counterclockwise encirclements of the point $(-1+j0)$ for $\pha{l}_k(s)$ when $s$ traverses the Nyquist contour, where $\pha{l}_k(s)$ has $P_1$ unstable poles. Let $Z_1$ be the number of unstable poles of the closed-loop fast system. It follows from the argument principle that $Z_1 = P_1-N_1$. The \textit{stability criterion }is then formulated as $Z_1 \leq 1$, indicating that all closed-loop poles are stable except one accounting for complex-frequency synchronization.

\textit{4) Admittance-Based Criterion for Voltage Stabilization:} Consider $\mat{L}_k(s) \coloneqq \mat{Y}_{{\rm agg},k}^{\rm dc}(s)\mat{Y}_{{\rm equ},k}^{{\rm dc}}(s)^{-1}$ as a return-ratio matrix. Let $N_2$ be the number of counterclockwise encirclements of the origin by $\det\bigl(\mat{I} + \mat{L}_k(s) \bigr)$ when $s$ traverses the Nyquist contour. Note that $\mat{L}_k(s)$ has no unstable poles but has a pole at the origin, cf. \eqref{eq-adm-equ-dc}, and thus, the Nyquist contour should exclude the origin. Moreover, the unity-feedback system of $\mat{L}_k(s)$ is assumed to have no hidden unstable modes. From the generalized Nyquist criterion, the \textit{criterion for voltage stabilization} is then given as $N_2 = 0$.

\section{Case Studies}
\label{sec-case-studies}

We present case studies on a 3-bus converter-based system and a modified IEEE 9-bus system with converters and synchronous generators to illustrate our theoretical results.

\textit{1) Complex-Frequency Synchronization and Voltage Stabilization:} The topology of the 3-bus converter-based system is shown in Fig.~\ref{fig-test-system}, where the line parameters are non-uniform and the converter setpoints are inconsistent. We model the converters using a high-fidelity electromagnetic transient (EMT) model, where both the inner-loop dynamics and the $LC$ filter dynamics are retained and the controllers are implemented in $\alpha\beta$ coordinates \cite{Subotic-dVOC}. We choose control gains of $\eta = 0.04$ and $\alpha = 5$ in per-unit, which provide a frequency droop slope of $\eta = 4\%$ and a voltage droop slope of $1/\alpha = 20\%$. We choose a filter time constant of $\tau = 0.005\, {\rm s}$ (a larger $\tau$ will deteriorate dynamic performance). Moreover, we choose $\varphi = \pi/4$ as a compromise choice to adapt to the heterogeneous $r/\ell$ ratios in the network. We verify that the system is stable for complex-frequency synchronization by checking that Condition~\ref{condition-suff-nece} holds. Furthermore, we compute that ${\lambda _2} \bigl(\re{{e^{j\varphi}} \phamat{Y}}\bigr) \approx 14.2$, highlighting that Condition~\ref{condition-suff} is easily satisfied for arbitrary power setpoints within the capacity range.

\begin{figure}
  \begin{center}
  \includegraphics[width=8cm]{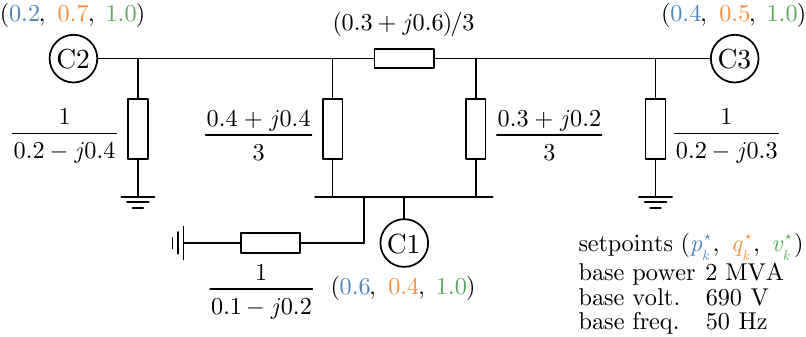}
  \caption{A 3-bus converter-based system, where the network has non-uniform $r/\ell$ ratios and the converters have inconsistent setpoints.}
  \label{fig-test-system}
  \end{center}
\end{figure}

\begin{figure}
  \begin{center}
  \includegraphics{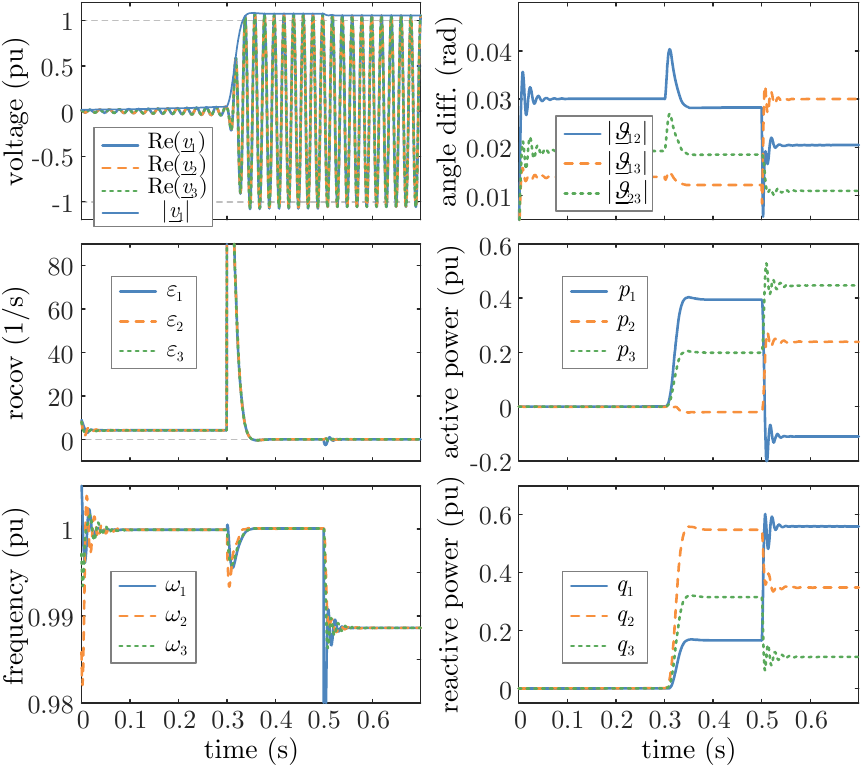}
  \caption{Simulation validation of complex-frequency synchronization, voltage convergence, small complex-angle differences, and power sharing.}
  \label{fig-simulation}
  \end{center}
\end{figure}

We illustrate complex-frequency synchronization and voltage convergence with the simulation results in Fig.~\ref{fig-simulation}. The system initiates from a point near the origin. Before $0.3\ {\rm s}$, voltage regulation is disabled to observe complex-frequency synchronization with the \textit{nonzero} rocov (rate of change of voltage) and frequency waveforms. Notably, the voltages increase exponentially in the synchronous state. At $0.3\ {\rm s}$, voltage regulation is enabled. The voltages are then boosted close to the nominal point, during which complex-frequency synchronization persists. At $0.5\ {\rm s}$, the power setpoint $\phaconj{\varsigma}_1^{\star}$ is changed from $0.6 - j0.4$ to $-0.1 - j0.9$. Subsequently, we observe a rapid process of complex-frequency synchronization, followed by the voltages settling to a new steady state. The active and reactive power sharing occurs in a drooped manner after the transition to steady-state operation. During voltage regulation, the complex-angle differences vary around zero, which supports the assumption of linear complex power flow underlying the previous analysis.

\begin{figure}
  \begin{center}
  \includegraphics{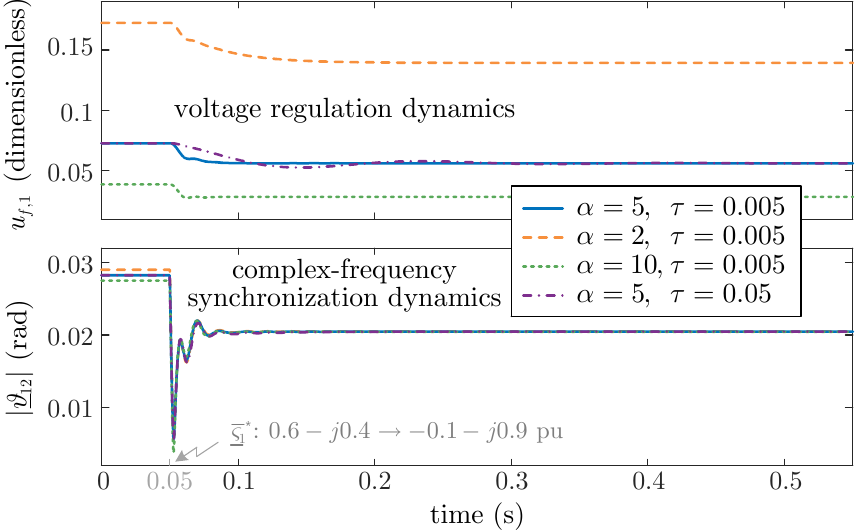}
  \caption{A small $\alpha$ and a large $\tau$ ensure the time-scale separation between complex-frequency dynamics (indicated by $\lvert \pha{\vartheta}_{12} \rvert$) and voltage amplitude regulation (indicated by $u_{f,1}$).}
  \label{fig-interaction}
  \end{center}
\end{figure}

\begin{figure}
  \begin{center}
  \includegraphics{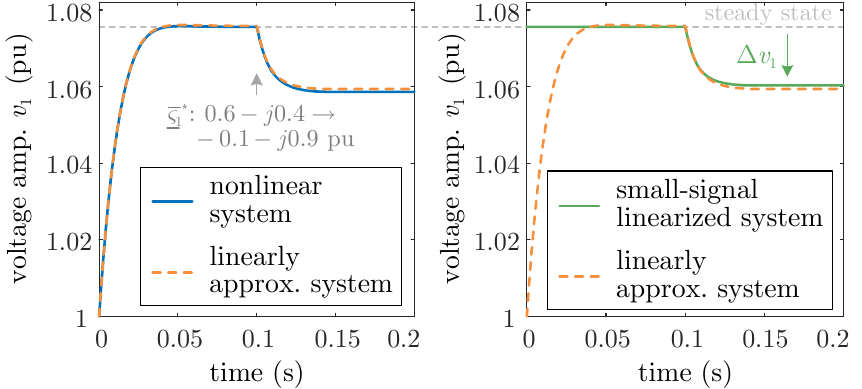}
  \caption{The linearly approximated system \eqref{eq-slow-system} shows a highly accurate large-signal response with respect to the nonlinear system \eqref{eq-state-space-log-filter}. The classical linearized system shows similar accuracy but is limited to a small-signal regime around the specific steady state.}
  \label{fig-linearization}
  \end{center}
\end{figure}

\begin{figure}[t]
  \begin{center}
  \includegraphics{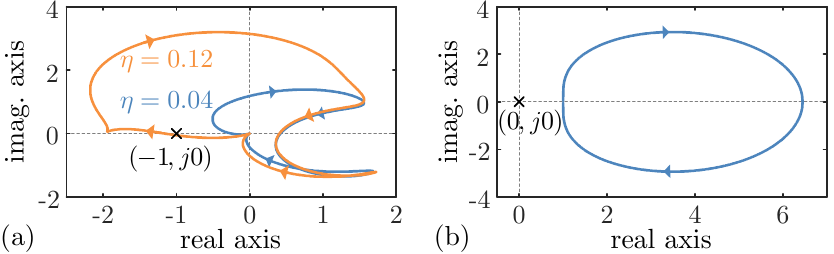}
  \caption{Nyquist diagrams of (a) the fast system and (b) the linearly approximated system.}
  \label{fig-nyquist-plot}
  \end{center}
\end{figure}

\textit{2) Time-Scale Separation Between Synchronization and Voltage Regulation:} The assumption of time-scale separation between complex-frequency synchronization and voltage amplitude regulation is crucial for our theoretical results. As illustrated in Fig.~\ref{fig-interaction}, the time-scale separation depends on the configuration of control parameters. Specifically, a smaller value of $\alpha$ results in a slower response of voltage amplitude (while also altering the steady state of voltage droop control). Additionally, a larger time constant $\tau$ in the voltage feedback filtering also contributes to the time-scale separation (albeit at the cost of longer setting times). When the parameters are chosen appropriately, the time-scale separation is effectively maintained, and both the steady-state operational range and dynamic performance requirements are met. Inappropriate parameter settings, such as $\alpha = 10,\, \tau = 0.005$, may result in time-scale overlap and significant interactions between complex-frequency synchronization and voltage regulation, thus challenging the feasibility of the developed linear methods and stability results.

\textit{3) Accuracy of the Linear Approximation:} The linear power flow approximation is employed to simplify the stability analysis of voltage stabilization. We validate the accuracy of the linear power-flow approximation through the results in Fig.~\ref{fig-linearization}. Both the nonlinear system \eqref{eq-state-space-log-filter} and the linearly approximated one \eqref{eq-slow-system} start from the origin of $\phavec{\vartheta} = \mathbbb{0}_N$. The response of the linearly approximated system very closely matches the nonlinear system response. Notably, the linear system response is characterized by being independent of any steady state, which is facilitated by the adoption of complex-angle coordinates with $\ln(\cdot)$. It is also of interest to examine the response of the classical linearized system around a specific steady state using the usual polar coordinates. To show this, we reformulate the nonlinear system in polar coordinates and then linearize it at the steady-state point, resulting in a small-signal representation (with state variables such as $\Delta v_k$). As observed, the small-signal response of this linearized system exhibits comparable accuracy to the proposed linear system \eqref{eq-slow-system} after the disturbance at $1\, {\rm s}$. However, we emphasize the linearized system response is contingent on the specific steady state, and its accuracy is confined to a small-signal regime. Therefore, the adoption of the conventional linearized system can only yield small-signal stability results.

\textit{4) Application of the Admittance-Based Stability Criteria:} When considering the network dynamics, the synchronization stability is adversely affected \cite{Gross-dVOC}. We test the viability of the admittance-based criterion to assess the destabilizing effect of the network dynamics. The admittance model is established with the dVOC synchronization dynamics and the network dynamics. We compare the synchronization stability under $\eta = 0.04$ and $0.12$. Without considering the network dynamics, both cases are identified to be stable by Conditions~\ref{condition-suff-nece} or \ref{condition-suff}. The system with the presence of the network dynamics, however, becomes unstable under the larger gain $\eta = 0.12$. The Nyquist diagram of $\pha{l}_1(s)$ in Fig.~\ref{fig-nyquist-plot}(a) indicates $N_1 = -1$ in this case, while $\pha{l}_1(s)$ has two unstable poles, so that $Z_1 = 3 > 1$. However, we allow at most one unstable closed-loop pole as synchronous complex frequency. We validate by the Nyquist criterion and eigenvalue calculation that we need $\eta < 0.082$ for stability. We further test the viability of the admittance-based criterion to assess voltage convergence. The Nyquist diagram of $\det\bigl(\mat{I} + \mat{L}_1(s) \bigr)$ is shown in Fig.~\ref{fig-nyquist-plot}(b), where $N_2 = 0$ confirms the voltage stabilization in the case of $\eta = 0.04$. We have validated the admittance model and the criterion under more cases by eigenvalue calculation with the model \eqref{eq-slow-system-to-origin}.

\begin{figure}
  \begin{center}
  \includegraphics[width=8cm]{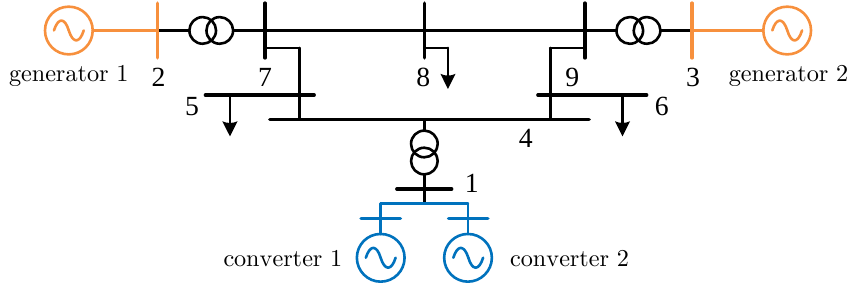}
  \caption{Modified IEEE 9-bus system, where two dVOC-controlled converters replace the synchronous generator at bus 1.}
  \label{fig-hybrid-system}
  \end{center}
\end{figure}

\begin{figure}
  \begin{center}
  \includegraphics{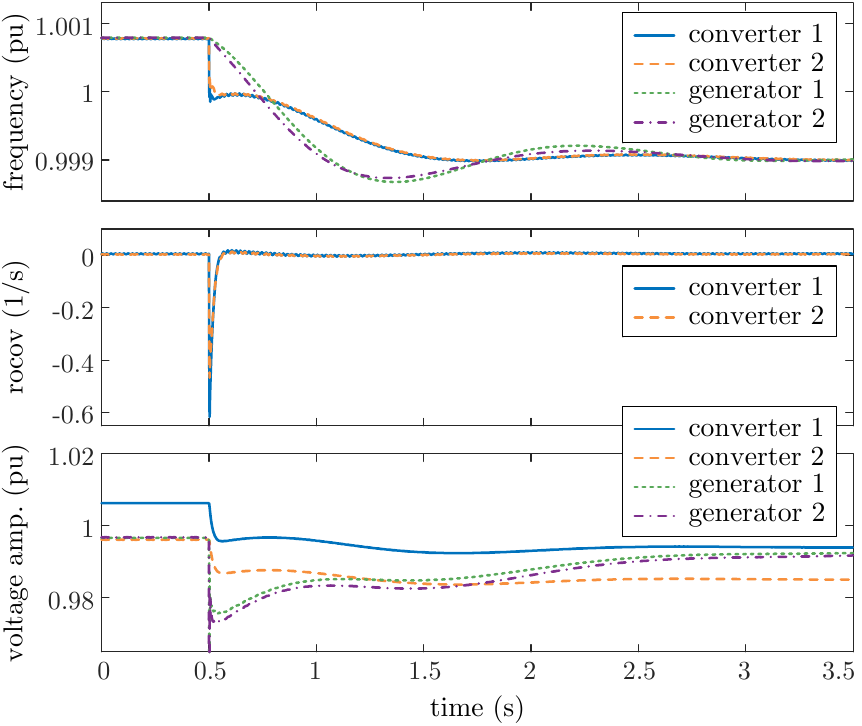}
  \caption{Complex-frequency synchronization and voltage stabilization of the converters are fast after a load increase on bus 6, whereas the frequency and voltage regulation of the generators is relatively slow, validating the modeling assumption that generators provide constant exogenous inputs to converters.}
  \label{fig-hybrid}
  \end{center}
\end{figure}

\textit{5) Response of Converters in Systems With Mixed Generation:} Finally, we observe the behavior of complex-frequency synchronization of converters in a more general power system that also includes synchronous generators. The system configuration is depicted in Fig.~\ref{fig-hybrid-system}, where the generator connected to bus 1 is replaced by two converters. Both converters are governed by the following control parameters: $\eta = 0.03\, {\rm pu}$, $\alpha = 5\, {\rm pu}$, $\tau = 0.005\, {\rm s}$, and $\varphi = \pi/3$. We increase the active power load on bus 6 by $50\, {\rm MW}$. The results in Fig.~\ref{fig-hybrid} indicate the dynamics of the generators are significantly slower than that of the converters due to the mechanical inertia of the rotor. Moreover, the voltage regulation of the generators is impacted by the slowly varying active power. In contrast, the converters achieve (approximate) complex-frequency synchronization rapidly. This suggests that our theoretical results can be utilized to analyze the stability of the fast converter dynamics by assuming that the generator's voltage and frequency are constant. If the fast converter dynamics are stable, then the converters will initially achieve cluster synchronization \cite{menara2020cluster} and subsequently together attempt to achieve full synchronization with the other generators. Regardless of their fast or slow dynamic behaviors, both the converters and generators exhibit droop operation during the steady state, aligning with fundamental regulations of power systems.

\section{Conclusion}
\label{sec-conclusion}

We investigate complex-frequency synchronization and voltage stabilization of dVOC-controlled multi-converter systems. By utilizing complex frequency as a two-dimensional signal to indicate the complex power imbalance, we define the notion of complex-frequency synchronization to investigate the phase-amplitude multivariable dynamics arising from the complex power imbalance. We show that dVOC (i.e., complex droop control) achieves complex-frequency synchronization on the fast time scale and then stabilizes voltage amplitudes on the slower time scale. An insightful understanding of dVOC is provided by revealing its equivalence to complex droop control, thus enabling us to view dVOC as a direct extension of classical p-f droop control. Our theoretical analysis shows that dVOC has superior stability properties. We present both time-domain stability analysis methods and frequency-domain stability criteria. The developed linear methods are practical to deal with the global stability of converter-based power systems such as microgrids and HVDC-connected offshore wind power plants. Furthermore, our methods may also have potential uses in investigating fast converter dynamics in more general power systems that also include synchronous machines.

\appendix

\proof[Proof of Throrem~\ref{theorem-suff-nece}]
Since the dominant eigenvalue ${\pha{\lambda}_1}$ has algebraic multiplicity one, the dominant mode is denoted by $e^{{\pha{\lambda}_1}t}$. The response of $\pha{v}_k$ is composed of a series of modes, where we separate out the dominant modal response as $\pha{v}_k^1$. By spectral decomposition, state equation decoupling, and modal response representation \cite[Sec. 12.2.4]{Kundur-book}, we obtain that ${\phavec{v}^1} \coloneqq [\pha{v}_1^1,\cdots,\pha{v}_N^1]\trans = {\phavec{\phi}_1}{\phavec{\psi}_1\trans}{\phavec{v}_0}{e^{{\pha{\lambda}_1}t}}$, where ${\phavec{v}_0}$ is an initial state. We first consider the case where ${\pha{z}}_0 \coloneqq {\phavec{\psi}_1\trans} {\phavec{v}_0} \neq 0$. We obtain the following limit for any node pair $(k,l)$
\begin{equation}
\label{eq-limit-vkl}
\begin{aligned}
    \lim\limits_{t \to \infty} \frac{{{{\pha{v}}_k}}}{{{{\pha{v}}_l}}} &= \mathop {\lim}\limits_{t \to \infty} \frac{{\pha{v}_k^1 + ( {{{\pha{v}}_k} - \pha{v}_k^1} )}}{{\pha{v}_l^1 + ( {{{\pha{v}}_l} - \pha{v}_l^1} )}} \\
    &= \mathop {\lim}\limits_{t \to \infty} \frac{{{{\pha{\phi}}_{k1}}{{\pha{z}}_0} + ( {{{\pha{v}}_k} - \pha{v}_k^1} ){e^{- {{\pha{\lambda}}_1}t}}}}{{{{\pha{\phi}}_{l1}}{{\pha{z}}_0} + ( {{{\pha{v}}_l} - \pha{v}_l^1} ){e^{- {{\pha{\lambda}}_1}t}}}} = \frac{{{{\pha{\phi}}_{k1}}}}{{{{\pha{\phi}}_{l1}}}},
\end{aligned}
\end{equation}
where ${\pha{\phi}}_{l1} \neq 0$ is the $l$th entry of $\phavec{\phi}_1$, and the third equality holds because $\pha{\lambda}_1$ has a larger real part than the eigenvalues of the other non-dominant modes in $({{{\pha{v}}_k} - \pha{v}_k^1} )$, i.e., $\re{\pha{\lambda}_1} > \re{\pha{\lambda}_2}$. We then obtain for the system \eqref{eq-fast-system} that
\begin{equation}
\label{eq-proof-sync}
\begin{aligned}
    \lim\limits_{t \to \infty} \frac{{\dot {\pha{v}}}_k}{{\pha{v}}_k} &= \pha{\varpi}_0 + \eta {e^{j\varphi}} \Bigl( \phaconj{\varsigma}_k^{\ast} - \lim\limits_{t \to \infty} \textstyle\sum\nolimits_{l = 1}^N {\pha{{y}}_{kl}} \frac{\pha{v}_l}{\pha{v}_k} \Bigr) \\
    &= \pha{\varpi}_0 + \eta {e^{j\varphi}} \Bigl( \phaconj{\varsigma}_k^{\ast} - \textstyle\sum\nolimits_{l = 1}^N {\pha{{y}}_{kl}} \frac{\pha{\phi}_{l1}}{\pha{\phi}_{k1}} \Bigr) = {\pha{\lambda}_1},
\end{aligned}
\end{equation}
where the last equality holds due to $\phamat{A} {\phavec{\phi}_1} = {\pha{\lambda}_1} {\phavec{\phi}_1}$, namely, $ \pha{\varpi}_0\pha{\phi}_{k1} + \eta {e^{j\varphi}} \bigl(\phaconj{\varsigma}_k^{\ast} \pha{\phi}_{k1} - \sum\nolimits_{m = 1}^N \pha{{y}}_{km} \pha{\phi}_{m1}\bigr) = {\pha{\lambda}_1} \pha{\phi}_{k1}$. It follows from \eqref{eq-proof-sync} that the complex frequencies converge to the dominant eigenvalue, i.e., ${\pha{\varpi}_{\rm sync}} = {\pha{\lambda}_1}$. Power system operation requires that only one fundamental-frequency component is present in the synchronous state. This is an additional requirement for Definition~\ref{def-complex-freq-sync-def}. It is apparent that $\re{\pha{\lambda}_2} < 0$ fulfills this requirement since the non-dominant modes decay to zero. 

We finally consider the special case where ${\phavec{\psi}_1\trans} {\phavec{v}_0} = 0$. In this case, the dominant modal response ${\phavec{v}^1} = {\phavec{\phi}_1}{\phavec{\psi}_1\trans}{\phavec{v}_0}{e^{{\pha{\lambda}_1}t}} = \mathbbb{0}_N$. The voltages converge to $\mathbbb{0}_N$ because of $\re{\pha{\lambda}_2} < 0$.
\endproof

\proof[Proof of Throrem~\ref{theorem-suff}] 
We first show that Condition~\ref{condition-suff} guarantees that the voltages with initial condition satisfying ${\phavec{\psi}_1\trans}{\phavec{v}_0} \neq 0$ converge to the eigenspace spanned by ${\phavec{\phi}_1}$, which is denoted by $\mathcal{S}$. A matrix $\phamat{P} \coloneqq \mat{I}_N - {\phavec{\phi}_1} {\phavec{\phi}_1\hermconj}/({\phavec{\phi}_1\hermconj} {\phavec{\phi}_1})$ denotes the projector onto the subspace orthogonal to $\mathcal{S}$. Then, the distance of the point $\phavec{v}$ to the set $\mathcal{S}$ is denoted as $\norm{\phavec{v}} _{\mathcal{S}} = \norm{\phamat{P}\,\phavec{v}}$. Notice that $\phamat{P}\hermconj = \phamat{P}$ and $\phamat{P}^2 = \phamat{P}$. The square of the distance is $\norm{\phavec{v}} _{\mathcal{S}} ^2 = \phavec{v}\hermconj \phamat{P}\, \phavec{v}$. We consider the Lyapunov function
\begin{equation}
\label{eq-lyap-fcn}
    V(\phavec{v}) \coloneqq \tfrac{1}{2} \norm{\phavec{v}} _{\mathcal{S}} ^2 = \tfrac{1}{2} \phavec{v}\hermconj \phamat{P}\, \phavec{v}.
\end{equation}
The time derivation of $V(\phavec{v})$ along the trajectories of the system \eqref{eq-fast-system} is given by $\dot{V}(\phavec{v}) = \frac{1}{2} \phavec{v}\hermconj (\phamat{A}\hermconj \phamat{P} + \phamat{P}\, \phamat{A}) \phavec{v}$ \footnote{The differential of $V(\phavec{v})$ is given by so-called Wirtinger derivatives as ${\rm d} V = \bigl(\frac{\partial V}{\partial \phavec{v}} \bigr) \trans {\rm d} \phavec{v} + \bigl (\frac{\partial V}{\partial \phavecconj{v}} \bigr) \trans {\rm d} \phavecconj{v}$ \cite[Th. 5.0.1]{hunger2007introduction}, where $\phavec{v}$ and $\phavecconj{v}$ can be treated as independent variables. It follows that $\dot V = \frac{1}{2} \phavec{v}\hermconj \phamat{P}\, \phamat{A}\, \phavec{v} + \frac{1}{2} \phavec{v}\trans \phamat{P}\trans\, \overline{\phamat{A}}\, \phavecconj{v} = \frac{1}{2} \phavec{v}\hermconj \phamat{P}\, \phamat{A}\, \phavec{v} + \frac{1}{2} (\phavec{v}\trans \phamat{P}\trans\, \overline{\phamat{A}}\, \phavecconj{v})\trans = \frac{1}{2} \phavec{v}\hermconj (\phamat{A}\hermconj \phamat{P} + \phamat{P}\, \phamat{A}) \phavec{v}$.}. From $\phamat{A} {\phavec{\phi}_1} = {\pha{\lambda}_1} {\phavec{\phi}_1}$ and $\phamat{P} = \mat{I}_N - {\phavec{\phi}_1} {\phavec{\phi}_1\hermconj}/({\phavec{\phi}_1\hermconj} {\phavec{\phi}_1})$, we obtain that $\phamat{A} - {\pha{\lambda}_1} \mat{I}_N = (\phamat{A} - {\pha{\lambda}_1}\mat{I}_N)\phamat{P}$, and further that $\phamat{A} = \phamat{A}\, \phamat{P} + {\pha{\lambda}_1} (\mat{I}_N - \phamat{P})$. It follows that $\phamat{P}\,\phamat{A} = \phamat{P}\,\phamat{A}\,\phamat{P}$ and $\phamat{A}\hermconj \phamat{P} = \phamat{P}\,\phamat{A}\hermconj \phamat{P}$. Therefore, we obtain 
\begin{equation}
\label{eq-v-dot}
\begin{aligned}
    \dot{V}(\phavec{v}) &= \tfrac{1}{2}\phavec{v}\hermconj \phamat{P} \bigl(\phamat{A} + \phamat{A}\hermconj\bigr) \phamat{P}\, \phavec{v} \\
    &= \eta \phavec{v}\hermconj \phamat{P} \left [\re{e^{j\varphi}{\rm diag} \left(\phavecconj{\varsigma}^{\ast}\right)} - \re{{e^{j\varphi}} \phamat{Y}} \right] \phamat{P}\, \phavec{v}.
\end{aligned}
\end{equation}
It follows that $\dot{V}(\phavec{v}) \leq 0$ holds if and only if
\begin{equation}
\label{eq-iff}
    \phavec{v}\hermconj \phamat{P} \re{e^{j\varphi}{\rm diag} \left(\phavecconj{\varsigma}^{\ast}\right)} \phamat{P}\, \phavec{v} \leq \phavec{v}\hermconj \phamat{P} \re{{e^{j\varphi}} \phamat{Y}} \phamat{P}\, \phavec{v}.
\end{equation}

Since $\re{e^{j\varphi}{\rm diag} \left(\phavecconj{\varsigma}^{\ast}\right)}$ is a diagonal matrix, the left-hand side of \eqref{eq-iff} is bounded as
\begin{equation}
\label{eq-left-hand-side}
    \phavec{v}\hermconj \phamat{P} \re{e^{j\varphi}{\rm diag} \left(\phavecconj{\varsigma}^{\ast}\right)} \phamat{P}\, \phavec{v} \leq \norm{\phavec{v}} _{\mathcal{S}} ^2 \max\limits_k \re{e^{j\varphi}{\phaconj{\varsigma}_k^{\ast}}}.
\end{equation}
To bound the right-hand side of \eqref{eq-iff}, we define another projector onto the orthogonal subspace of the span of $\mathbbm{1}_N$ as ${\mat{\Pi}} \coloneqq {\mat{I}_N} - \mathbbm{1}_N{\mathbbm{1}_N\trans}/N$ satisfying ${\mat{\Pi}\trans} = {\mat{\Pi}}$ and ${\mat{\Pi}^2} = {\mat{\Pi}}$. Since $\re{{e^{j\varphi}} \phamat{Y}}$ is a Laplacian matrix, its smallest eigenvalue is zero, and the corresponding right and left eigenvectors are $\mathbbm{1}_N$. It follows that $\re{{e^{j\varphi}} \phamat{Y}} = \re{{e^{j\varphi}} \phamat{Y}} \mat{\Pi} = \mat{\Pi} \re{{e^{j\varphi}} \phamat{Y}}$. We bound the right-hand side term of \eqref{eq-iff} as follows
\begin{equation*}
\begin{aligned}
    &\phavec{v}\hermconj \phamat{P} \re{{e^{j\varphi}} \phamat{Y}} \phamat{P}\, \phavec{v}
    = {{\phavec{v}}\hermconj}\phamat{P}{\mat{\Pi}}\re{{e^{j\varphi}} \phamat{Y}}{\mat{\Pi}}\phamat{P}\, \phavec{v} \\
    &\geq {\lambda _2}\bigl(\re{{e^{j\varphi}} \phamat{Y}}\bigr){\norm{{{\mat{\Pi}}\phamat{P}\, \phavec{v}}}^2}\\
    &= {\lambda _2}\bigl(\re{{e^{j\varphi}} \phamat{Y}}\bigr) \bigl( {{{\phavec{v}}\hermconj}\phamat{P}\, \phavec{v} - {{\phavec{v}}\hermconj}\phamat{P}\mathbbm{1}_N{\mathbbm{1}_N\trans}\phamat{P}\, \phavec{v}}/{N} \bigr)\\
    &= {\lambda _2}\bigl(\re{{e^{j\varphi}} \phamat{Y}}\bigr) \bigl( {{{\phavec{v}}\hermconj}\phamat{P}\, \phavec{v} - {{\phavec{v}}\hermconj}\phamat{P}(\phamat{P}\mathbbm{1}_N){{(\phamat{P}\mathbbm{1}_N)}\hermconj}\phamat{P}\, \phavec{v}}/{N} \bigr)\\
    &\geq {\lambda _2}\bigl(\re{{e^{j\varphi}} \phamat{Y}}\bigr)\norm{{\phavec{v}}}_{\mathcal{S}}^2 \bigl( {1 - {{\norm{{\phamat{P}\mathbbm{1}_N}}}^2}/{N}} \bigr)\\
    &= {\lambda _2}\bigl(\re{{e^{j\varphi}} \phamat{Y}}\bigr)\norm{{\phavec{v}}}_{\mathcal{S}}^2 \bigl( {1 - {{\mathbbm{1}_N\trans}\phamat{P}\mathbbm{1}_N}/{N}} \bigr)\\
    &= {\lambda _2}\bigl(\re{{e^{j\varphi}} \phamat{Y}}\bigr)\norm{{\phavec{v}}}_{\mathcal{S}}^2 \bigl( {{{\mathbbm{1}_N\trans} {\phavec{\phi}_1} {\phavec{\phi}_1\hermconj}\mathbbm{1}_N} / ({N{\phavec{\phi}_1\hermconj} {\phavec{\phi}_1}})} \bigr)\\
    &= {\lambda _2}\bigl(\re{{e^{j\varphi}} \phamat{Y}}\bigr) \left\| {\pha{v}} \right\|_{\mathcal{S}}^2 \frac{1}{N \sum\nolimits_{k=1}^N {{{\bigl\lvert {{{\pha{\phi}}_{k1}}} \bigr\rvert}^2}}} \times \\
    & \Bigl[ {\textstyle\sum\limits_{k=1}^N {{\bigl\lvert{\pha{\phi}}_{k1}\bigr\rvert}^2} + 2\textstyle\sum\limits_{k = 1}^N {\textstyle\sum\limits_{l = 1}^{l < k} {\bigl\lvert{\pha{\phi}}_{k1}\bigr\rvert \bigl\lvert {\pha{\phi}}_{l1}\bigr\rvert \cos \bigl(\angle{{\pha{\phi}}_{k1}} - \angle {{\pha{\phi}}_{l1}} \bigr )}}} \Bigr],
\end{aligned}
\end{equation*}
where the first inequality holds because the smallest Rayleigh quotient of ${\mat{\Pi}}\phamat{P}\,\phavec{v}$ orthogonal to the eigenvector $\mathbbm{1}_N$ of the smallest eigenvalue is the second smallest eigenvalue. The second inequality holds due to the Cauchy-Schwarz inequality.

When ${\phavec{\psi}_1\trans}{\phavec{v}_0} \neq 0$, it holds that $\lim_{t \to \infty} ({{\pha{v}}_k}/{{\pha{v}}_l}) = {{\pha{\phi}}_{k1}}/{{\pha{\phi}}_{l1}}$ as in \eqref{eq-limit-vkl}, where ${\pha{\phi}}_{l1} \neq 0,\ \forall l$ because Condition~\ref{condition-suff} requires the voltage amplitude ratio in the synchronous state to be nonzero and bounded. Then, the synchronous-state constraints in Condition~\ref{condition-suff} can be converted to $\bigl\lvert \angle{{\pha{\phi}}_{k1}} - \angle {{\pha{\phi}}_{l1}} \bigr\rvert \leq \bar{\delta}$ and $\bigl\lvert {| {\pha{\phi}}_{k1} |} / {| {\pha{\phi}}_{k1} |} - 1 \bigr\rvert \le \bar{\gamma} $. Following an analogous argument as in the proof of \cite[Lemma 2]{Gross-dVOC}, the term $\phavec{v}\hermconj \phamat{P} \re{{e^{j\varphi}} \phamat{Y}} \phamat{P}\, \phavec{v}$ can be bounded as 
\begin{equation}
\label{eq-right-hand-side}
\begin{aligned}
    \phavec{v}\hermconj \phamat{P} \re{{e^{j\varphi}} \phamat{Y}} \phamat{P}\, \phavec{v} &\geq {\lambda _2}\bigl(\re{{e^{j\varphi}} \phamat{Y}}\bigr)\norm {\pha{v}}_{\mathcal{S}}^2\tfrac{{\min\limits_k {{\lvert {{{\phavec{\phi}}_{k1}}} \rvert}^2}}}{{\max\limits_k {{\lvert {{{\phavec{\phi}}_{k1}}} \rvert}^2}}}\tfrac{{1 + \cos \bar{\delta}}}{2}\\
    &\geq {\lambda _2}\bigl(\re{{e^{j\varphi}} \phamat{Y}}\bigr)\norm {\pha{v}}_{\mathcal{S}}^2{\left(1 - \bar{\gamma} \right)^2}\tfrac{{1 + \cos \bar \delta}}{2}.
\end{aligned}
\end{equation}

By substituting \eqref{eq-left-hand-side} and \eqref{eq-right-hand-side} into \eqref{eq-v-dot}, we observe that the parametric relationship in \eqref{eq-suff-formula} is sufficient for
\begin{equation}
\label{eq-v-dot-less-than-0}
    \dot{V}(\phavec{v}) < 0, \quad \forall \phavec{v} \notin \mathcal{S}.
\end{equation}
We conclude from \eqref{eq-lyap-fcn} and \eqref{eq-v-dot-less-than-0} that the system \eqref{eq-fast-system} is asymptotically stable with respect to the eigenspace $\mathcal{S}$ for any initial states ${\phavec{v}_0}$ satisfying ${\phavec{\psi}_1\trans}{\phavec{v}_0} \neq 0$. From $\lim_{t \to \infty} ({{\pha{v}}_k}/{{\pha{v}}_l}) = {{\pha{\phi}}_{k1}}/{{\pha{\phi}}_{l1}}$, we further obtain that $\lim\limits_{t \to \infty} ({{\dot {\pha{v}}}_k}/{{\pha{v}}_k}) = {\pha{\lambda}_1}$ as in \eqref{eq-proof-sync}. This shows that the system \eqref{eq-fast-system} synchronizes at $\pha{\lambda}_1$.

Moreover, the asymptotic stability implies $\lim_{t \to \infty}\norm{\phavec{v}} _{\mathcal{S}} ^2 = 0$, which suffices to show $\re{\pha{\lambda}_k} < 0$ for all $k \geq 2$. Namely, Condition~\ref{condition-suff} $\Rightarrow$ Condition~\ref{condition-suff-nece}. It then follows that the voltages converge to $\mathbbb{0}_N$ for ${\phavec{\psi}_1\trans} {\phavec{v}_0} = 0$.
\endproof

\begin{lemma}
\label{lemma-slow-system-equili}
The system in \eqref{eq-slow-system-coa} has a unique equilibrium $[\vect{u}_s\trans, \vect{\delta} _s\trans, \vect{u}_s\trans]\trans$, where $[\vect{u}_s\trans, \vect{\delta} _s\trans]\trans$ is the solution of
\begin{equation}
\label{eq-slow-system-equ}
    \underbrace{\begin{bmatrix}
    \mat{G}^{\prime} + \alpha \mat{I}_N & -\mat{B}^{\prime} \\ \mat{B}^{\prime} & \mat{G}^{\prime} \\ \mathbbb{0}_N\trans & \mathbbm{1}_N\trans
    \end{bmatrix}}_{\mat{A}}
    \begin{bmatrix}
    \vect{u}_s \\ \vect{\delta} _s
    \end{bmatrix} = 
    \underbrace{\begin{bmatrix}
    \vect{\sigma}^{\prime\star} + \alpha \vect{u}^{\star} \\
    \vect{\rho}^{\prime\star} - \mathbbm{1}_N\mathbbm{1}_N\trans\vect{\rho}^{\prime\star}/N \\ 0
    \end{bmatrix}}_{\vect{b}}.
\end{equation}
\end{lemma}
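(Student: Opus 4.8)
The plan is to characterize the equilibria of \eqref{eq-slow-system-coa} directly and reduce the claim to the solvability and uniqueness of the linear system \eqref{eq-slow-system-equ}. First I would set all time derivatives in \eqref{eq-slow-system-coa} to zero. The filter equation $\tau\dot{\vect{u}}_f = \vect{u} - \vect{u}_f$ immediately forces $\vect{u}_f = \vect{u}$ at any equilibrium, which explains the repeated block $\vect{u}_s$ in $[\vect{u}_s\trans, \vect{\delta}_s\trans, \vect{u}_s\trans]\trans$. Substituting $\vect{u}_f = \vect{u}$ into the $\dot{\vect{u}}$ and $\dot{\vect{\delta}}$ equations and dividing by $\eta > 0$ reproduces exactly the first two block rows of \eqref{eq-slow-system-equ}; the third row $\mathbbm{1}_N\trans\vect{\delta}_s = 0$ is not an extra dynamical constraint but the defining property of the center-of-angle coordinate $\vect{\delta} = \vect{\theta} - \mathbbm{1}_N\theta_0$, which is orthogonal to $\mathbbm{1}_N$ by construction. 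Before proceeding I would record the structural facts that drive everything: since $\mat{G}^{\prime} + j\mat{B}^{\prime} = e^{j\varphi}\phamat{Y}$ and $\phamat{Y}$ is a symmetric complex Laplacian, both $\mat{G}^{\prime}$ and $\mat{B}^{\prime}$ are real symmetric with zero row sums, so $\mat{G}^{\prime}\mathbbm{1}_N = \mat{B}^{\prime}\mathbbm{1}_N = \mathbbb{0}_N$, and $\mat{G}^{\prime} = \re{e^{j\varphi}\phamat{Y}}$ is a positive-semidefinite Laplacian whose kernel is exactly $\mathrm{span}(\mathbbm{1}_N)$ (its algebraic connectivity $\lambda_2(\mat{G}^{\prime}) > 0$ for the connected network).

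For uniqueness I would show that the overdetermined matrix $\mat{A}$ has full column rank by proving its homogeneous system has only the trivial solution. Left-multiplying the two homogeneous block equations by $\vect{u}\trans$ and $\vect{\delta}\trans$ respectively and adding, the cross terms $\pm\vect{u}\trans\mat{B}^{\prime}\vect{\delta}$ cancel because $\mat{B}^{\prime}$ is symmetric, leaving the energy identity $\vect{u}\trans\mat{G}^{\prime}\vect{u} + \alpha\norm{\vect{u}}^2 + \vect{\delta}\trans\mat{G}^{\prime}\vect{\delta} = 0$. All three terms are nonnegative, so each must vanish: $\alpha > 0$ gives $\vect{u} = \mathbbb{0}_N$, while $\vect{\delta}\trans\mat{G}^{\prime}\vect{\delta} = 0$ places $\vect{\delta}$ in $\ker\mat{G}^{\prime} = \mathrm{span}(\mathbbm{1}_N)$, whereupon the third row $\mathbbm{1}_N\trans\vect{\delta} = 0$ forces $\vect{\delta} = \mathbbb{0}_N$. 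Hence $\mat{A}$ has trivial kernel, and the solution, if it exists, is unique.

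Existence is where I expect the only real subtlety, since \eqref{eq-slow-system-equ} has $2N+1$ equations in $2N$ unknowns. I would argue via the Fredholm alternative on the $2N \times 2N$ block $\mat{M}$ collecting the first two block rows of $\mat{A}$, with right-hand side $\vect{b}_1$ collecting the first two blocks of $\vect{b}$. The very same quadratic-form computation (now without the constraint row) shows $\ker\mat{M} = \ker(\mat{M}\trans) = \mathrm{span}\{[\mathbbb{0}_N\trans, \mathbbm{1}_N\trans]\trans\}$, i.e.\ the one-dimensional angle-shift direction. Consequently $\mat{M}\vect{x} = \vect{b}_1$ is solvable if and only if $\vect{b}_1$ is orthogonal to $[\mathbbb{0}_N\trans, \mathbbm{1}_N\trans]\trans$, which reduces to $\mathbbm{1}_N\trans(\vect{\rho}^{\prime\star} - \mathbbm{1}_N\mathbbm{1}_N\trans\vect{\rho}^{\prime\star}/N) = 0$; this holds identically thanks to the projection built into the right-hand side of \eqref{eq-slow-system-equ}. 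Thus $\mat{M}\vect{x} = \vect{b}_1$ admits a one-parameter family of solutions differing by multiples of $[\mathbbb{0}_N\trans, \mathbbm{1}_N\trans]\trans$, all sharing the same $\vect{u}_s$, and the remaining row $\mathbbm{1}_N\trans\vect{\delta}_s = 0$ fixes that single free parameter uniquely.

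Combining the two parts yields a unique $[\vect{u}_s\trans, \vect{\delta}_s\trans]\trans$ solving \eqref{eq-slow-system-equ}, and with $\vect{u}_f = \vect{u}_s$ from the filter equation this gives the unique equilibrium $[\vect{u}_s\trans, \vect{\delta}_s\trans, \vect{u}_s\trans]\trans$ claimed. The main obstacle is precisely reconciling the rank deficiency of $\mat{M}$ induced by the rotational (angle-shift) invariance of the dynamics with both the compatibility of $\vect{b}$ and the role of the extra constraint row; the symmetry-driven energy identity and the deliberate mean-subtraction of $\vect{\rho}^{\prime\star}$ are exactly what make this go through, so I would foreground that computation as the technical heart of the proof.
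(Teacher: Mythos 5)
Your proof is correct, and it reaches the same reduction as the paper (set the derivatives to zero, note $\vect{u}_f=\vect{u}$, and study the linear system \eqref{eq-slow-system-equ}), but it settles the solvability question by a genuinely different and more explicit route. The paper argues via the Rouch\'e--Capelli rank condition $\mathrm{rank}(\mat{A})=\mathrm{rank}([\mat{A},\vect{b}])=2N$, counting row ranks block by block: the top $N$ rows are independent because of $\alpha\mat{I}_N$, the middle $N$ rows have rank $N-1$ because $\mat{G}^{\prime}$ and $\mat{B}^{\prime}$ are Laplacians, and the augmented middle block stays at rank $N-1$ because $\mathbbm{1}_N\trans$ annihilates both the coefficient blocks and the mean-subtracted right-hand side. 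That argument is terse --- knowing the ranks of the individual row blocks does not by itself pin down the rank of the stacked matrix --- whereas your energy identity $\vect{u}\trans\mat{G}^{\prime}\vect{u}+\alpha\norm{\vect{u}}^2+\vect{\delta}\trans\mat{G}^{\prime}\vect{\delta}=0$ (cross terms cancelling by symmetry of $\mat{B}^{\prime}$) directly exhibits the trivial kernel of $\mat{A}$ and, via the Fredholm alternative on the $2N\times 2N$ block $\mat{M}$, shows that $\ker\mat{M}\trans=\mathrm{span}\{[\mathbbb{0}_N\trans,\mathbbm{1}_N\trans]\trans\}$ and that the mean-subtraction of $\vect{\rho}^{\prime\star}$ is exactly the compatibility condition for existence. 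This buys a cleaner separation of existence from uniqueness and makes transparent the role of the angle-shift invariance and of the constraint row $\mathbbm{1}_N\trans\vect{\delta}_s=0$ (which, as you correctly note, is the defining property of the center-of-angle coordinate rather than an equilibrium equation); it does rely on $\mat{G}^{\prime}=\re{e^{j\varphi}\phamat{Y}}$ being a positive-semidefinite Laplacian with $\ker\mat{G}^{\prime}=\mathrm{span}(\mathbbm{1}_N)$, which is exactly the standing assumption the paper makes in Condition~\ref{condition-suff}, so you should state it as a hypothesis you are importing.
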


\begin{proof}
    We assume that there are equilibria in the system \eqref{eq-slow-system-coa}. Then, it holds that $\dot {\vect{u}} = \mathbbb{0}_N$, $\dot {\vect{\delta}} = \mathbbb{0}_N$, and $\vect{u} = \vect{u}_f$ in the equilibria. Thus, \eqref{eq-slow-system-equ} follows from \eqref{eq-slow-system-coa}. The set of equations \eqref{eq-slow-system-equ} has a unique solution if and only if the rank condition ${\rm rank} \left( \mat{A} \right) = {\rm rank} \left(\left [\mat{A}, \vect{b} \right] \right) = 2N$ holds. The top $N$ rows of $\mat{A}$ are linearly independent because of the presence of $\alpha \mat{I}_N$. The middle $N$ rows have a row rank of $N - 1$ since $\mat{B}^{\prime}$ and $\mat{G}^{\prime}$ are Laplacian. Furthermore, the middle $N$ rows of $\left[\mat{A}, \vect{b}\right]$ have a row rank of $N - 1$ due to $\mathbbm{1}_N\trans \left[\mat{B}^{\prime}, \mat{G}^{\prime}, \vect{\rho}^{\prime\star} - \mathbbm{1}_N \mathbbm{1}_N \trans \vect{\rho} ^{\prime\star}/N \right] = \left[ \mathbbb{0}_{N} \trans, \mathbbb{0}_{N}\trans, 0\right]$. Therefore, the rank condition holds, indicating the system \eqref{eq-slow-system-coa} has a unique equilibrium.
\end{proof}

\proof[Proof of Throrem~\ref{theorem-volt-stab-guarant}] Conciser the error coordinates as
$[\tilde{\vect{u}}\trans, \tilde{\vect{\delta}}\trans, \tilde{\vect{u}}_f\trans]\trans \coloneqq [\vect{u}\trans, \vect{\delta} \trans, \vect{u}_f\trans]\trans - [\vect{u}_s\trans, \vect{\delta} _s\trans, \vect{u}_s\trans]\trans$. The error dynamics are then given as
\begin{equation}
\label{eq-slow-system-to-origin}
\begin{aligned}
    \dot {\tilde{\vect{u}}} &= - \eta \mat{G}^{\prime} \tilde{\vect{u}} + \eta \mat{B}^{\prime} \tilde{\vect{\delta}} - \eta \alpha \tilde{\vect{u}}_f \\
    \dot {\tilde{\vect{\delta}}} &= - \eta \mat{B}^{\prime} \tilde{\vect{u}} - \eta \mat{G}^{\prime} \tilde{\vect{\delta}}\\
    \tau \dot{\tilde{\vect{u}}}_f &= \tilde{\vect{u}} - \tilde{\vect{u}}_f.
\end{aligned}
\end{equation}
A positive-definite Lyapunov function is defined as
\begin{equation*}
    W({\tilde{\vect{u}}},{\tilde{\vect{\delta}}_f},\tilde{\vect{u}}) = \tfrac{1}{2}{{\tilde{\vect{u}}}\trans} \tilde{\vect{u}} + \tfrac{1}{2}{\tilde{\vect{\delta}}\trans} \tilde{\vect{\delta}} + \tfrac{1}{2}\eta \alpha \tau {\tilde{\vect{u}}_f \trans} \tilde{\vect{u}}_f.
\end{equation*}
The time derivative of it along \eqref{eq-slow-system-to-origin} is 
\begin{equation*}
    \dot {W}({\tilde{\vect{u}}},{\tilde{\vect{\delta}}_f},\tilde{\vect{u}}) = -\eta {\tilde{\vect{u}}\trans} \mat{G}^{\prime} \tilde{\vect{u}} - \eta {\tilde{\vect{\delta}}\trans} \mat{G}^{\prime} \tilde{\vect{\delta}} -\eta \alpha \tilde{\vect{u}}_f \trans \tilde{\vect{u}}_f,
\end{equation*}
which is negative semi-definite. From $\dot {W}({\tilde{\vect{u}}},{\tilde{\vect{\delta}}_f},\tilde{\vect{u}}) = 0$, \eqref{eq-slow-system-to-origin}, and $\mathbbm{1}_N\trans \tilde{\vect{\delta}} = 0$, we obtain that $[\tilde{\vect{u}}\trans, \tilde{\vect{\delta}}\trans, \tilde{\vect{u}}_f\trans]\trans = \mathbbb{0}_{3N}$. It follows from the LaSalle's invariance principle \cite[Corollary 4.2]{Khalil-nonlinear} that the system \eqref{eq-slow-system-to-origin} is globally asymptotically stable with respect to the origin. Equivalently, the system \eqref{eq-slow-system-coa} is globally asymptotically stable with respect to the equilibrium $[\vect{u}_s\trans, \vect{\delta} _s\trans, \vect{u}_s\trans]\trans$. For the system \eqref{eq-slow-system-real-coeff}, ${\vect{u}} \to \vect{u}_s$ and $\dot {\vect{\theta}}$ (i.e., $\dot {\vect{\delta}} + \mathbbm{1}_N \dot{\theta}_0$) $\to \mathbbm{1}_N \dot{\theta}_0$ as $t \to \infty$. The steady-state frequency is $ \dot{\theta}_0 = {\omega _0} + \eta \mathbbm{1}_N\trans \vect{\rho}^{\prime\star}/N = {\omega _0} + \eta \mathbbm{1}_N\trans \im{{e^{j\varphi}} \phavecconj{\varsigma}^{\star}}/N$.
\endproof

\ifCLASSOPTIONcaptionsoff
  \newpage
\fi


\bibliographystyle{IEEEtran}
\bibliography{IEEEabrv,Bibliography}

\begin{IEEEbiography}[{\includegraphics[width=1in,height=1.25in,clip,keepaspectratio]{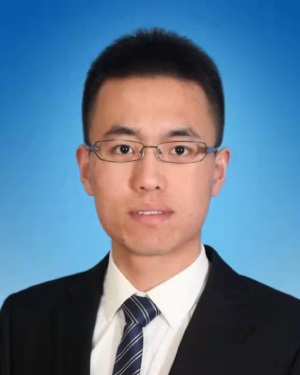}}]{Xiuqiang He}
received his B.S. degree and Ph.D. degree in control science and engineering from Tsinghua University, China, in 2016 and 2021, respectively. Since 2021, he has been a Postdoctoral Researcher with the Automatic Control Laboratory, ETH Zürich, Switzerland. His current research interests include power system dynamics, stability, and control, involving multidisciplinary expertise in automatic control, power systems, power electronics, and renewable energy sources. Dr. He was the recipient of the Beijing Outstanding Graduates Award and the Outstanding Doctoral Dissertation Award from Tsinghua University.
\end{IEEEbiography}

\begin{IEEEbiography}[{\includegraphics[width=1in,height=1.25in,clip,keepaspectratio]{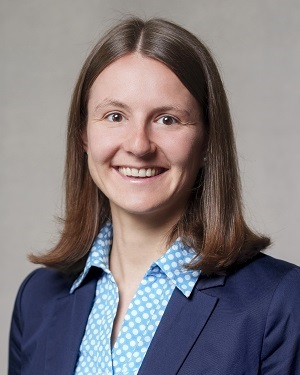}}]{Verena Häberle}
received the B.Sc. and M.Sc. degrees in electrical engineering and information technology from ETH Zürich, Switzerland, in 2018 and 2020, respectively, where she is currently pursuing the Ph.D. degree with the Automatic Control Laboratory. For her outstanding academic achievements during her Master’s thesis with the Automatic Control Laboratory, ETH Zürich, under Prof. F. Dörfler, she was honored with ETH Medal and the SGA Award from the Swiss Society of Automatic Control. Her research focuses on the control design of dynamic virtual power plants in future power systems.
\end{IEEEbiography}

\begin{IEEEbiography}[{\includegraphics[width=1in,height=1.25in,clip,keepaspectratio]{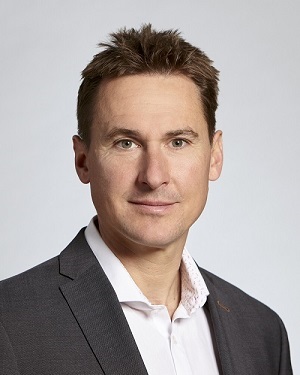}}]{Florian Dörfler}
is a Professor at the Automatic Control Laboratory at ETH Zürich. He received his Ph.D. degree in Mechanical Engineering from the University of California at Santa Barbara in 2013, and a Diplom degree in Engineering Cybernetics from the University of Stuttgart in 2008. From 2013 to 2014 he was an Assistant Professor at the University of California Los Angeles. His research interests are centered around automatic control, system theory, and optimization. His particular foci are on network systems, data-driven settings, and applications to power systems. He is currently serving on the council of the European Control Association and as a senior editor of Automatica.
\end{IEEEbiography}

\end{document}